\newtheorem{Theorem}{Theorem}
\newtheorem{Corollary}{Corollary}
\newtheorem{Proposition}{Proposition}
\theoremstyle{definition}
\newtheorem{defn}{Definition}
\newtheorem{rem}{Remark}
\newtheorem{prop}{Property}
\newcommand\X[1]{$\mathcal{#1}^{\bullet}$}
\newcommand{\hh}[1]{$ #1= (#1^{\bullet},#1^{\times} $)}
\newcommand{\vv}[1]{${#1}^{\bullet}$}
\newcommand\ee[1]{${#1}^{\times}$}
\title{Morphological filtering  on hypergraphs}
\author{
\small {Bino~Sebastian~V\thanks{On deputation from Mar Athanasius College, Kothamangalam, and is supported by the University Grants Commission(UGC), Govt. of India under the FDP scheme.}}\\
        \small {Research~Scholar}\\
        \small{Department of Computer~Applications}\\
        \small{Cochin University of Science and Technology}\\
        \small{binosebastianv@gmail.com} \and
        \small{A Unnikrishnan}\\
         \small{Dean, R~\&~D}\\
           \small{Rajagiri School of Engineering and Technology}\\
           \small{unnikrishnan\_a@live.com}\and
        \small{Kannan Balakrishnan}\\
         \small{Associate~Professor}\\
           \small{Department of Computer~Applications}\\
           \small{Cochin University of Science and Technology}\\
           \small{mullayilkannan@gmail.com}\and
        \small{Ramkumar P. B.}\\
         \small{Assistant~Professor}\\
           \small{Department of Mathematics}\\
           \small{Rajagiri School of Engineering and Technology}\\
           \small{ramkumar\_pb@rajagiritech.ac.in}
           }
           \date{ }
\begin{document}
\maketitle

\begin{abstract}
The focus of this article is to develop computationally efficient mathematical morphology operators on hypergraphs. To this aim we consider lattice structures on hypergraphs on which we build morphological operators. We develop a pair of  dual adjunctions between the vertex set and the hyper edge set of a hypergraph $H$, by defining a vertex-hyperedge correspondence. This allows us to recover the classical notion of a dilation/erosion of a subset of vertices and to extend it to subhypergraphs of $H$. Afterward, we propose several new openings, closings, granulometries and alternate sequential filters acting (i) on the subsets of the vertex and hyperedge set of $H$ and (ii) on the subhypergraphs of a hypergraph.

\end{abstract}

\textbf{Keywords}: Hypergraphs, Mathematical Morphology, Complete Lattices, Adjunctions, Granulometries, Alternating Sequential Filters.

\section{Introduction}
Mathematical morphology, appeared in 1960s is a theory of nonlinear information processing \cite{gonzalez2002woods}, \cite{serra1988image}, \cite{serra1982image}, \cite{shih2010image}. It is a branch of image analysis based on algebraic, set-theoretic and geometric principles \cite{heijmans1990algebraic}, \cite{ronse1990mathematical}. Originally, it is devoleped for binary images by Matheron and Serra. They are the first to observe that a general theory of mathematical morphology is based on the assumption that the underlying image space is a complete lattice. Most of the morphological theory at this abstract level was developed and presented without making references to the properties of the underlying space. Considering digital objects carrying structural information, mathematical morphology has been developed on graphs \cite{cousty2013morphological}, \cite{cousty2009some}, \cite{heijmans1993graph}, \cite{vincent1989graphs} and simplicial complexes \cite{dias2011some}, but little work has been done on hypergraphs \cite{bloch2011mathematical}, \cite{bloch2013mathematical}, \cite{bloch2013similarity}, \cite{stell2012relations}.
 
When dealing with a hypergraph $H$, we need to consider the hypergraph induced by the subset $X^\bullet$ of vertices of $H$ (see figure 1(a) and (b), where the blue vertices and edges in (b) represents $X$). We associate with $X^\bullet$ the largest subset of hyperedges of $H$ such that the obtained pair is a hypergraph. We denote it by $H(X^\bullet)$ (see section 2.1 and figure 1(b)). We also consider a hypergraph induced by a subset $X^\times$ of the edges of $H$, namely $H(X^\times)$.

Here we propose a systematic study of the basic operators that are used to derive a set of hyper edges from a set of vertices and a set of vertices from a set of hyperedges. These operators are the hypergraph extension to the operators defined by Cousty \cite{cousty2013morphological}, \cite{cousty2009some} for graphs. Since a hypergraph becomes a graph when $|v(e)|=2$  for every hyperedge $e$, all the properties of these operators are satisfied for graphs also. We emphasis that the input and output of these operators are both hypergraphs. The blue subhypergraph in figure 1(c) is the result of the dilation $[\triangle, \delta](X)$ of the blue subhypergraph $X$ in figure 1(b) proposed in this paper. Here the resultant subhypergrah in figure 1(c) is not induced by its vertex set.

\begin{center}
\begin{figure}[htp!]
\begin{minipage}[b]{0.3\linewidth}
\centering
\includegraphics[scale=1]{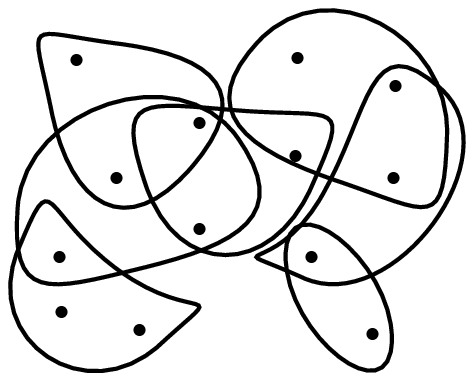}
\caption*{(a) $H$}
\label{fig:figure1}
\end{minipage}
\hspace{0cm}
\begin{minipage}[b]{0.3\linewidth}
\centering
\includegraphics[scale=1]{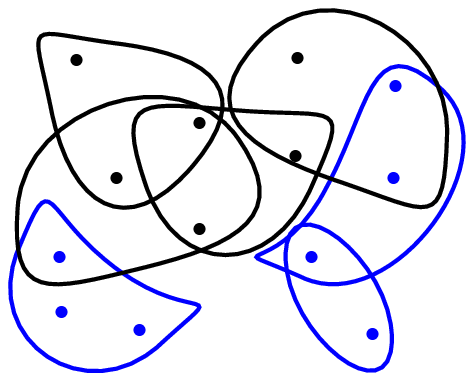}
\caption*{(b) $X$}
\label{fig:figure2}
\end{minipage}
\hspace{0cm}
\begin{minipage}[b]{0.3\linewidth}
\centering
\includegraphics[scale=1]{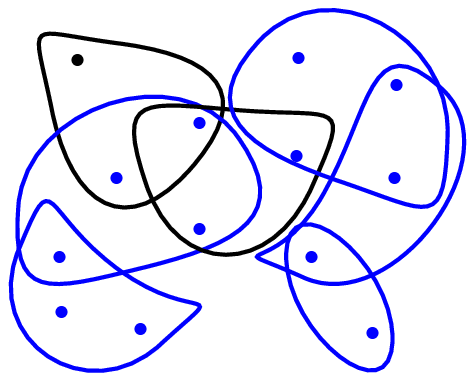}
\caption*{(c) Dilation $[\triangle, \delta](X)$}
\end{minipage}
\hspace{0cm}
\caption{Illustration of hypergraph dilation}\label{fig:fig1}
\end{figure}
\end{center}

This paper is organized as follows. In Section 2 we recall some related works on graphs and hypergraphs. In Section 3, we describe some preliminary definitions and results on mathematical morphology and hypergraphs. In section 4, we define the vertex-hyperedge correspondence along with various dilations, erosions and adjunctions on hypergraphs. The properties of these morphological operators are studied in this section. In section 5, we propose several new openings, closings, granulometries and alternate sequential filters acting (i) on the subsets of vertices and hyperedges and (ii) on the subhypergraphs. Section 6 concludes the paper. 

\section{Related works}
Graph theoretic methods have found increasing applications in image analysis. Morphological operators are well studied on graphs. Vincent \cite{vincent1989graphs} defined morphological operators on a graph $G=(V,E)$, where $V$ represents a set of weighted vertices and $E$, a set of edges between vertices. The dilation (resp. erosion) replace the value of each vertex with the maximum (resp. minimum) value of its neighbors. Cousty et. al. \cite{cousty2013morphological}, \cite{cousty2009some} considered a graph as a pair $G=(G^\bullet, G^\times)$, where $G^\bullet$ is the set of vertices and $G^\times$ is the edge set of the graph $G$. They define morphological operators on various lattices formed by the graph $G$ by defining an edge-vertex correspondence. This powerful tool allows them to recover the clasical notion of a dilation/erosion of a subset of vertices of $G$. This lead them to propose several new openings, closings, granulometries and alternate sequential filters acting on the subsets of the edge sets, subsets of vertex sets and the lattice of subgraphs of $G$. These operators are further extended to functions that weight the vertices and edges of $G$ \cite{najman2012short} and are found to be useful in image filtering. In this work we aim to develop morphological operators on hypergraphs by defining a vertex-hyperedge correspondence. 

The theory of hypergraphs originated as a natural generalisation of graphs in 1960s. In a hypergraph, edges can connect any number of vertices and are called hyperedges. Considering the topological and geometrical aspects of an image, Bretto \cite{bretto1997combinatorics} has proposed a hypergraph model to represent an image. The theory of hypergraphs became an active area of research in image analysis \cite{bloch2013mathematical}, \cite{dharmarajan2010hypergraph}. The study of mathematical morphology operators on hypergraphs started recently, and little work being reported in this regard. Properties of morphological operators on hypergraphs are studied in \cite{stell2012relations}, in which subhypergraphs are considered as relations on hypergraphs. Recently, Bloch and Bretto \cite{bloch2011mathematical} introduced mathematical morphology on hypergraphs by forming various lattices on hypergraphs. Similarity and pseudo-metrics based on mathematical morphology are defined and illustrated in \cite{bloch2013similarity}. Based on these morphological operators, similarity measures are used for classification of data represented as hypergraphs  \cite{bloch2013mathematical}.

\section{Preliminaries}

\subsection{Hypergraphs}
We define a $hypergraph$ \cite{berge1989hypergraphs}, \cite{bloch2011mathematical} as a pair \hh{H} where \vv{H} is a set of points called $vertices$ and \ee{H} is composed of a family of subsets of \vv{H} called $hyperedges$. We denote \ee{H} by \ee{H}=$(e_{i})_{i \in I}$ where $I$ is a finite set of indices. The set of vertices forming the hyperedge $e$ is denoted by $v(e)$. A vertex $x$ in \vv{H} is called an $isolated~vertex$ of $H$ if $x \notin \underset{i \in I}\cup v(e_{i})$. The $empty~hypergraph$ is the hypergraph $H_{\phi}$ such that \vv{H}=$\phi$ and \ee{H}=$\phi$. The $partial~hypergraph$ $H'$ of $H$ generated by $J \subseteq I$ is the hypergraph \hh{H'} where \vv{H'} = \vv{H} and \ee{H'} = $(e_{j})_{j \in J}$.  A hypergraph $X=(X^\bullet,X^\times)$ is called a $subhypergraph$ of $H$, denoted by $X \subseteq H$, if $X^\bullet \subseteq H^\bullet$ and $X^\times \subseteq H^\times$.

Let \vv{X} $\subseteq$ \vv{H} and \ee{X} $\subseteq$ \ee{H} where \ee{X} $=(e_{j}), j \in J$ such that $J \subseteq I$. We denote by $\overline{X^\bullet}$ (resp. $\overline{X^\times}$) by the complementary set of \vv{X} (resp. \ee{X}). Let $H(X^\bullet)$ and $H(X^\times)$ respectively denote the hypergraphs $(X^\bullet, \lbrace e_{i}, i \in I |v(e_{i}) \subseteq X^\bullet \rbrace)$ and $(\underset{j \in J} \cup v(e_{j}), (e_{j})_{j \in J})$.

While dealing with a hypergraph $H$, we consider the subhypergraph induced by a subset \vv{X} of vertices of $H$ namely $H(X^\bullet)$, and the subhypergraph induced by a subset \ee{X} of hyperedges namely $H(X^\times)$. $H(X^\bullet)$ is the largest subhypergraph of $H$ with \vv{X} as vertex set and $H(X^\times)$ is the smallest subhypergraph of $H$ with \ee{X} as its hyperedge set.

\subsection{Mathematical Morphology}
Now let us briefly recall some algebraic tools that are fundamental in mathematical morphology \cite{cousty2013morphological}, \cite{heijmans1997composing}, \cite{heijmans1990algebraic}, \cite{ronse1990mathematical}. Given two lattices  $\mathcal{L}_{1}$ and  $\mathcal{L}_{2}$, any operator $\delta:\mathcal{L}_{1} \to \mathcal{L}_{2}$ that distributes over the supremum and preserves the least element is called a $dilation$ ($i.e.$ $\forall \varepsilon \subseteq \mathcal{L}_{1}, \delta(\vee_{1} \varepsilon) = \vee_{2} \lbrace \delta(X) | X \in \varepsilon \rbrace) $. Similarly an operator that distributes over the infimum and preserves the greatest element is called an $erosion$. 

Two operators $\epsilon:\mathcal{L}_{1} \to \mathcal{L}_{2}$ and $\delta:\mathcal{L}_{2} \to \mathcal{L}_{1}$ form an $adjunction$ $(\epsilon, \delta)$, if for any $X \in \mathcal{L}_{1} $ and any $Y \in \mathcal{L}_{2}$, we have $ \delta(X) \leq_{1} Y \Leftrightarrow X \leq_{2} \epsilon(Y)$, where $\leq_{1}$ and $\leq_{2}$ denote the order relations in $\mathcal{L}_{1}$ and $\mathcal{L}_{2}$ respectively \cite{heijmans1997composing}. Given two operators $\epsilon$ and  $\delta$, if the pair $(\epsilon, \delta)$ is an adjunction, then $\epsilon$ is an erosion and $\delta$ is a dilation. If $\leq_{1}$, $\leq_{2}$ and $\leq_{3}$ are three lattices and if $\delta: \mathcal{L}_{1} \to \mathcal{L}_{2}$, $\delta^{'}: \mathcal{L}_{2} \to \mathcal{L}_{3}$, $\epsilon: \mathcal{L}_{2} \to \mathcal{L}_{1}$ and $\epsilon^{'}: \mathcal{L}_{2} \to \mathcal{L}_{2}$ are four operators such that $(\epsilon, \delta)$ and $(\epsilon^{'}, \delta^{'})$ are adjunctions, then the pair $(\epsilon \circ \epsilon^{'}, \delta \circ \delta^{'})$ is also an adjunction.

Given two complemented lattices, $\mathcal{L}_{1}$ and $\mathcal{L}_{2}$, two operators $\alpha$ and $\beta$ are $dual$ with respect to the complement of each other, if for each $ X \in \mathcal{L}_{1}$, we have $\beta(X)=\overline{\alpha(\overline{X})}$. If $\alpha$ and $\beta$ are dual of each other, then $\beta$ is an erosion whenever $\alpha$ is a dilation.

\section{Hypergraph morphology: dilations, erosions and adjunctions}
In a hypergraph $H$, we can consider sets of points as well as sets of hyperedges. Therefore it is convenient to consider operators that go from one kind of sets to the other one. In this section we define such operators and study their morphological properties. Based on these operators, we propose several dilations, erosions and adjunctions on various lattices formed by $H$.

Hereafter the workspace (see \cite{cousty2013morphological} and \cite{cousty2009some} for a similar structure defined for graphs)  is a hypergraph \hh{H} and we consider the sets \vv{\mathcal{H}}, \ee{\mathcal{H}} and $\mathcal{H}$ of respectively all subsets of \vv{H}, all subsets of \ee{H} and all subhypergraphs of $\mathcal{H}$. 

The set $\mathcal{H}$ of all subhypergraphs of a hypergraph $H$ form a complete lattice \cite{stell2012relations}.  $\mathcal{H}$ is not a Boolean algebra as the complement of a subhypergraph of $H$ need not be a subhypergraph of $H$. But \vv{\mathcal{H}} and \ee{\mathcal{H}} are Boolean algebras. We define morphological operators on these lattices. For, we establish a correspondence between the vertex set and the hyperedge set of $H$. Composing these mappings produces morphological operators on the lattices \vv{\mathcal{H}}, \ee{\mathcal{H}} and $\mathcal{H}$.

\begin{defn}
(\textbf{Vertex-Hyperedge Correspondence})
We define the operators \vv{\delta}, \vv{\epsilon} from \ee{\mathcal{H}} into \vv{\mathcal{H}} and the operators \ee{\delta}, \ee{\epsilon} from \vv{\mathcal{H}} into \ee{\mathcal{H}} as follows.
\begin{center}
  \begin{tabular}{ | p{5cm} |  p{5cm} | p{5cm} | }
    \hline
     & \ee{\mathcal{H}} $\rightarrow$ \vv{\mathcal{H}} &  \vv{\mathcal{H}} $\rightarrow$ \ee{\mathcal{H}} \\
    \hline
    Provide the object with a hypergraph structure & \ee{X} $\rightarrow$ \vv{\delta}(\ee{X}) such that (\vv{\delta}(\ee{X}), \ee{X}) = $H(X^\times)$ & $X^\bullet \rightarrow \epsilon^\times(X^\bullet)$ such that $(X^\bullet, \epsilon^\times(X^\bullet)=H(X^\bullet)$ \\ \hline
    Provide its complement with a hypergraph structure & \ee{X} $\rightarrow$ \vv{\epsilon}(\ee{X}) such that ($\overline{\epsilon^\bullet({X^\times})}, \overline{X^\times}) = H(\overline{X^\times})$ & $X^\bullet \rightarrow \delta^\times(X^\bullet)$ such that $(\overline{X^\bullet}, \overline{\delta^\times(X^\bullet)}=H(\overline{X^\bullet})$ \\ 
    \hline    
  \end{tabular}
\end{center}
\end{defn}
These operators are illustrated in figures 2(a)-(f). The choice of $H$ is in such a way that every hyperedge of $H$ is incident with exactly four vertices, and the choice of $X$ is made to present a representative sample of the different possible configurations on subhypergraphs.
\begin{center}
\begin{figure}[htp!]
\begin{minipage}[b]{0.15\linewidth}
\centering
\includegraphics[scale=.5]{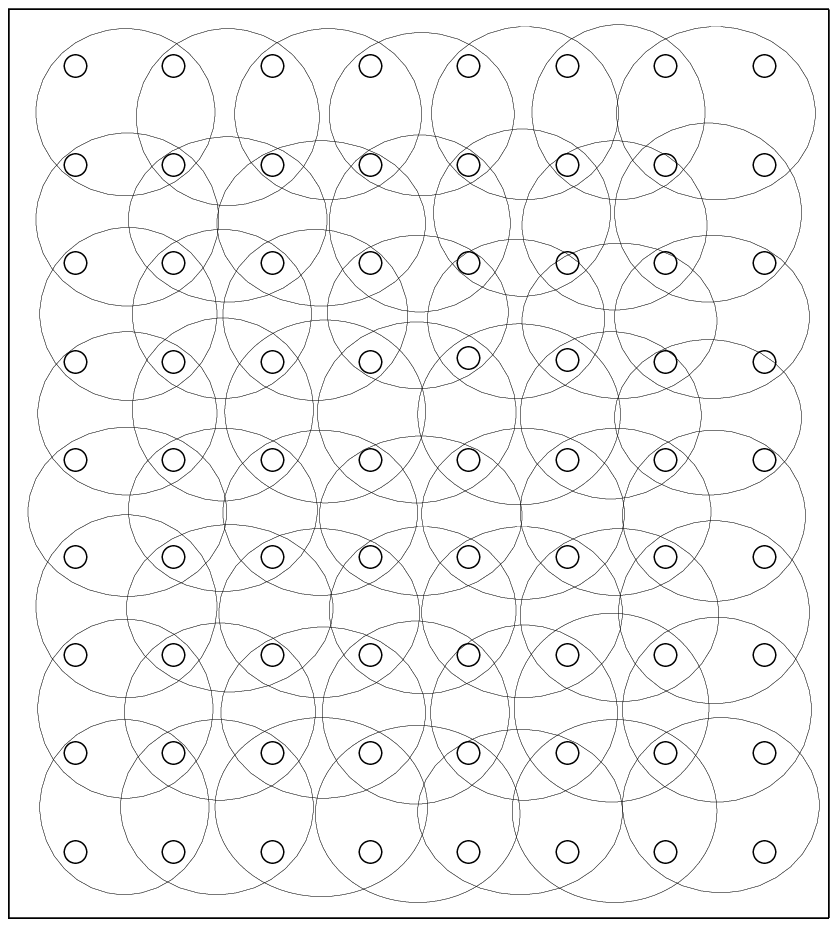}
\caption*{(a) $H$}
\end{minipage}
\hspace{2cm}
\begin{minipage}[b]{0.15\linewidth}
\centering
\includegraphics[scale=.5]{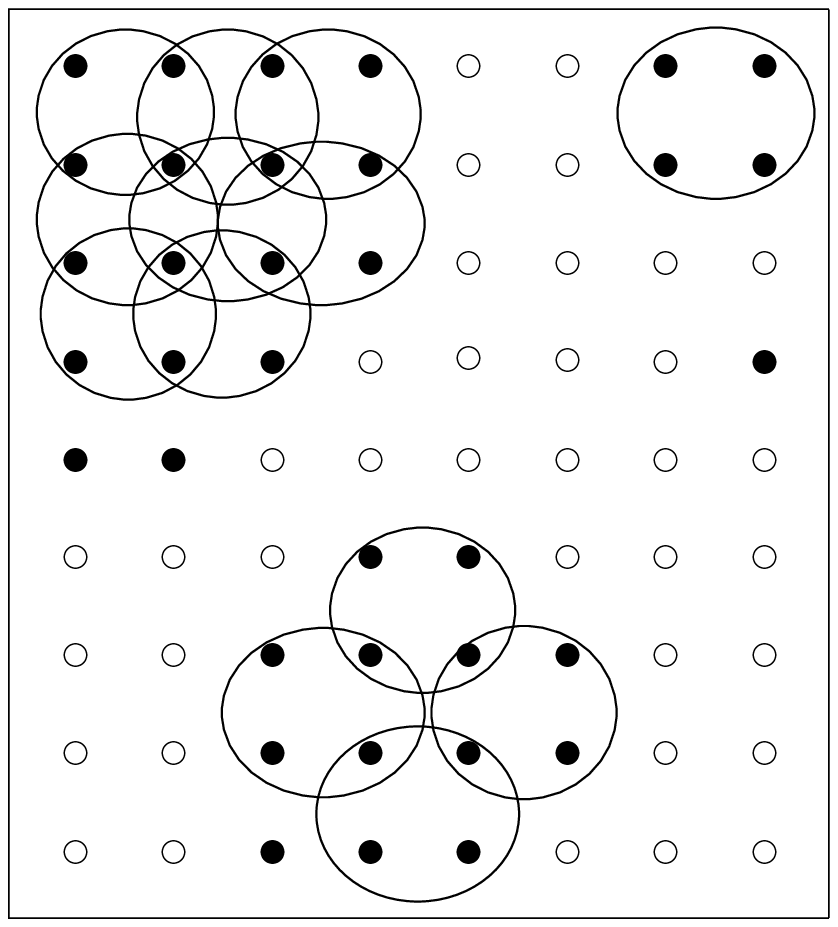}
\caption*{(b) $X$}
\end{minipage}
\hspace{2cm}
\begin{minipage}[b]{0.15\linewidth}
\centering
\includegraphics[scale=.5]{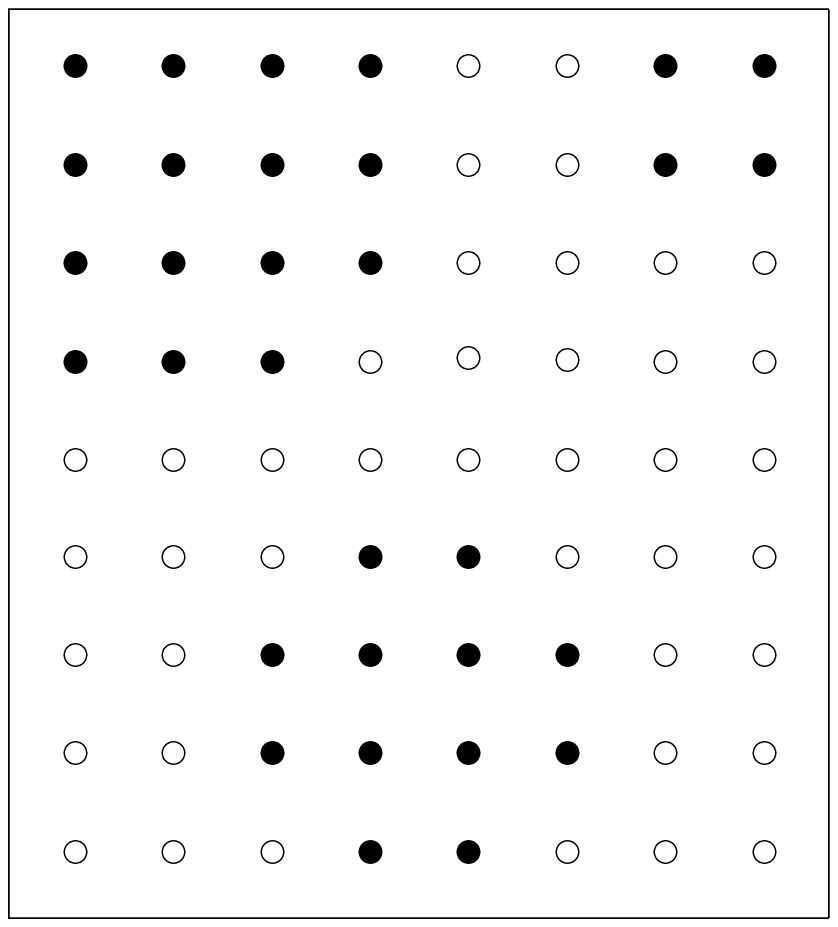}
\caption*{(c) $\delta^{\bullet}(X^{\times}) $}
\end{minipage}
\hspace{2cm}
\begin{minipage}[b]{0.15\linewidth}
\centering
\includegraphics[scale=.5]{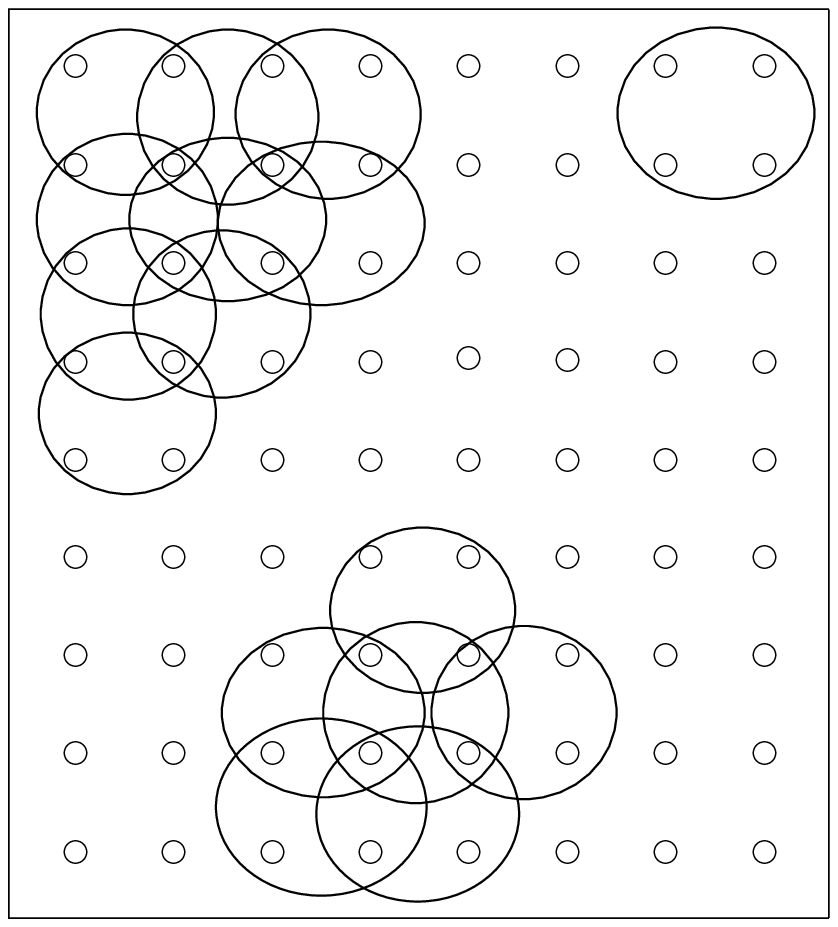}
\caption*{(d) $\epsilon^{\times}(X^{\bullet})$}
\end{minipage}
\hspace{2cm}
\begin{minipage}[b]{0.15\linewidth}
\centering
\includegraphics[scale=.5]{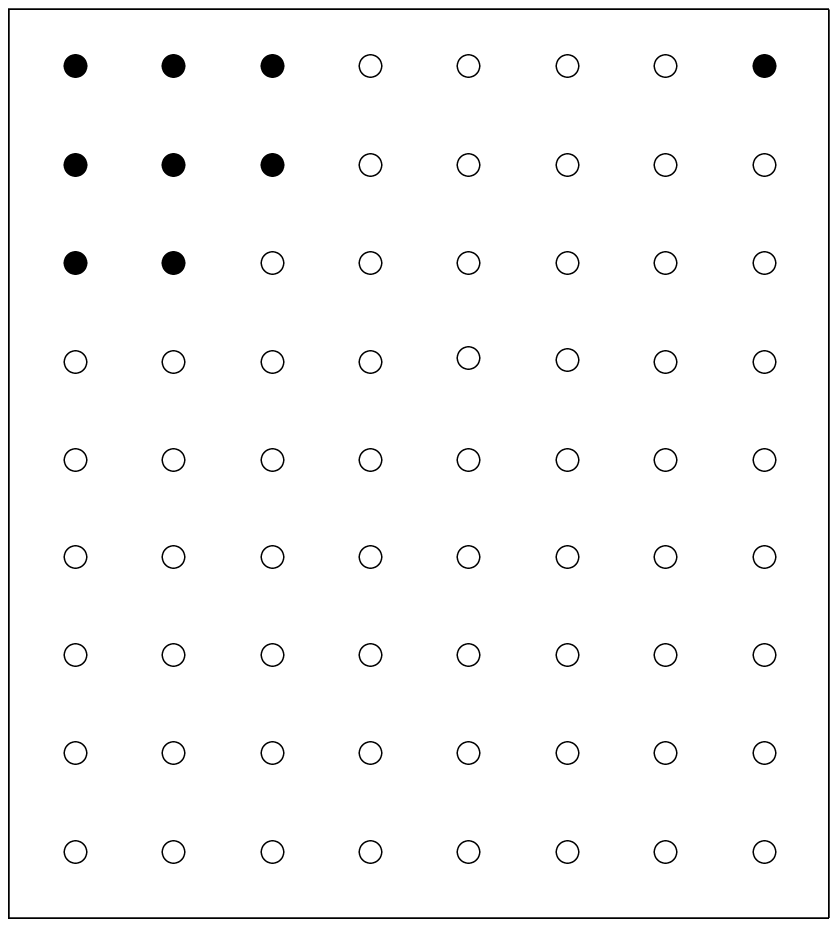}
\caption*{(e) $\epsilon^{\bullet}(X^{\times})$}
\end{minipage}
\hspace{2cm}
\begin{minipage}[b]{0.15\linewidth}
\centering
\includegraphics[scale=.5]{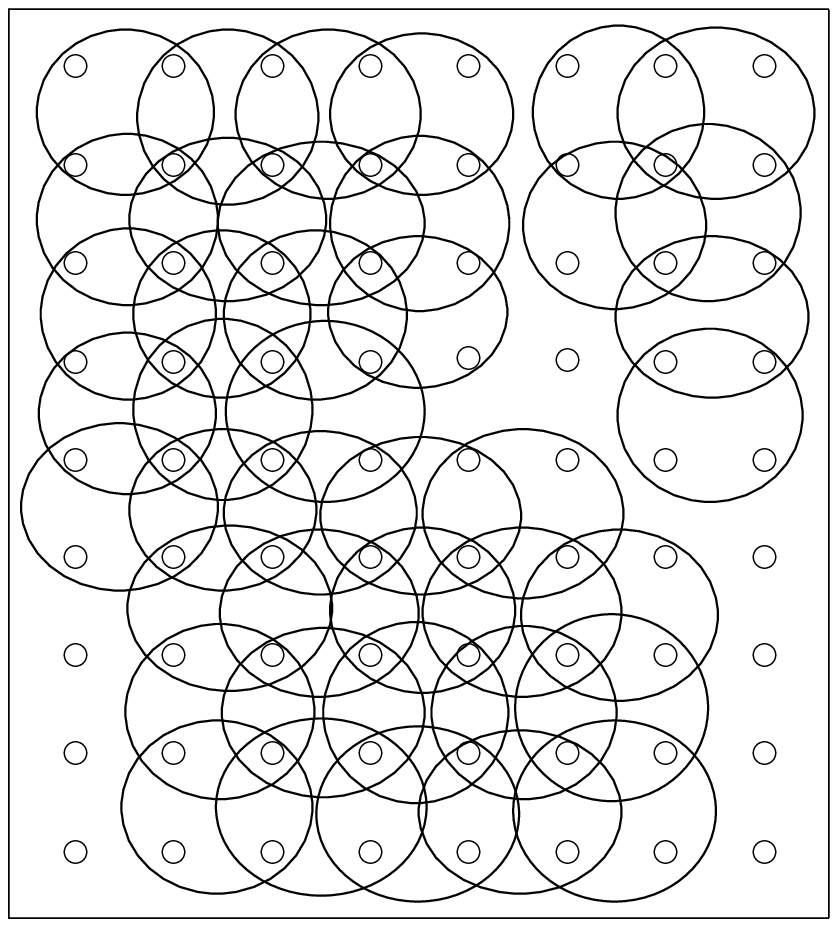}
\caption*{(f) $\delta^{\times}(X^{\bullet})$}
\end{minipage}
\hspace{2cm}
\begin{minipage}[b]{0.15\linewidth}
\centering
\includegraphics[scale=.5]{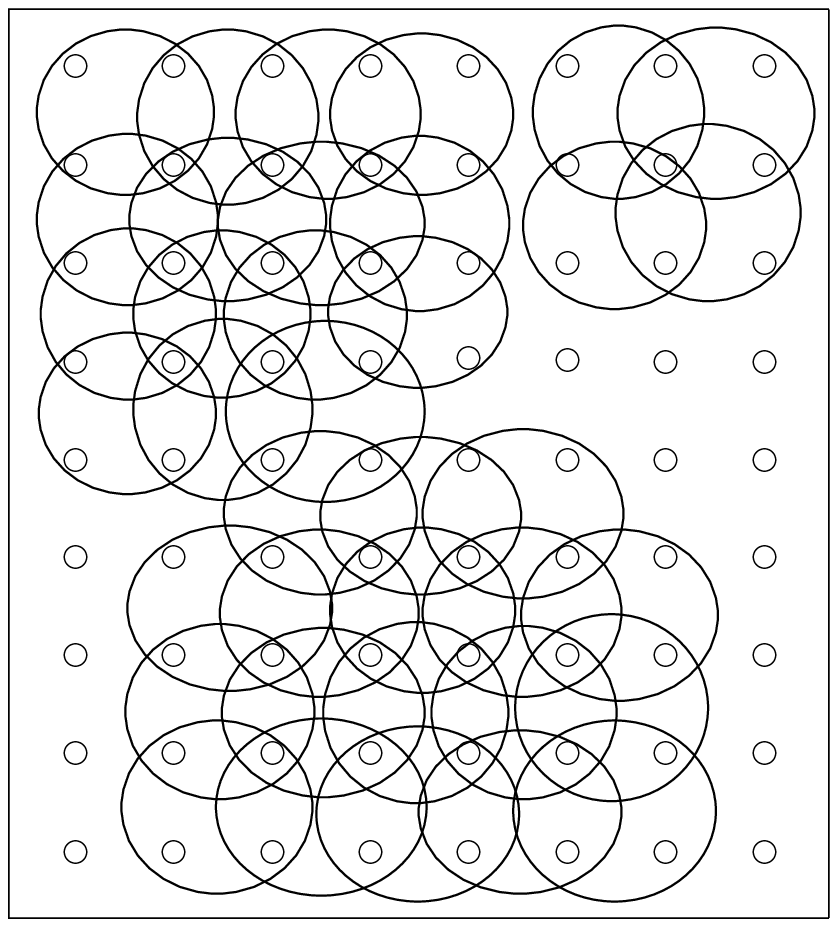}
\caption*{(g) $\delta(X^\bullet)$}
\end{minipage}
\hspace{2cm}
\begin{minipage}[b]{0.15\linewidth}
\centering
\includegraphics[scale=.5]{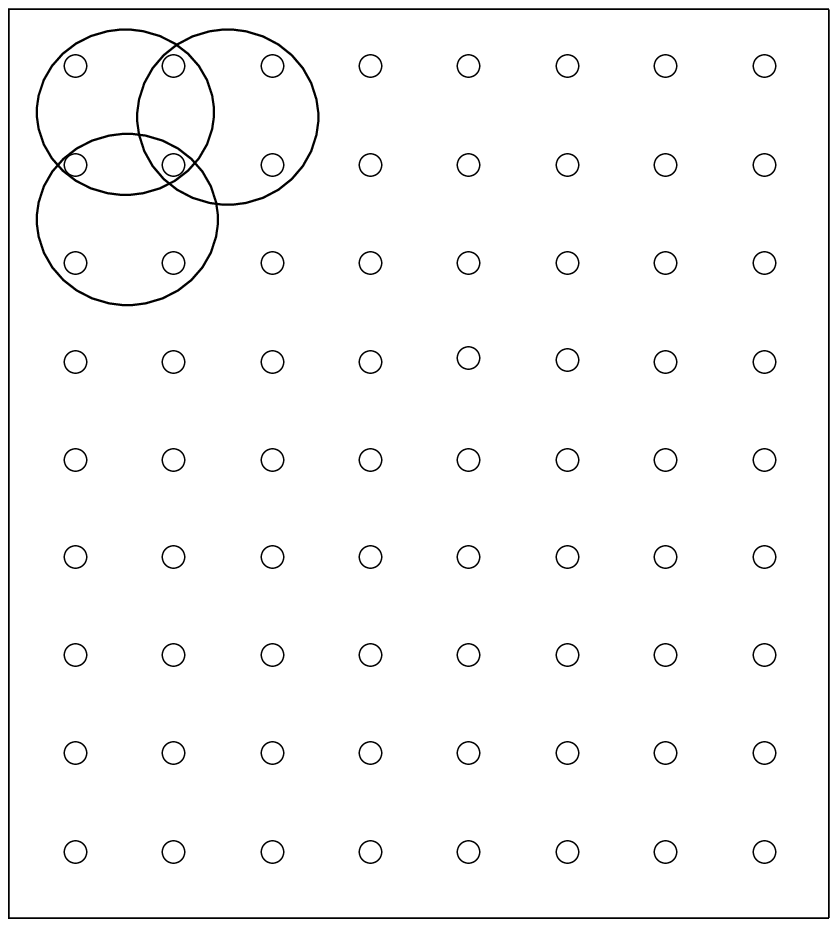}
\caption*{(h) $\epsilon(X^\bullet)$}
\end{minipage}
\hspace{2cm}
\begin{minipage}[b]{0.15\linewidth}
\centering
\includegraphics[scale=.5]{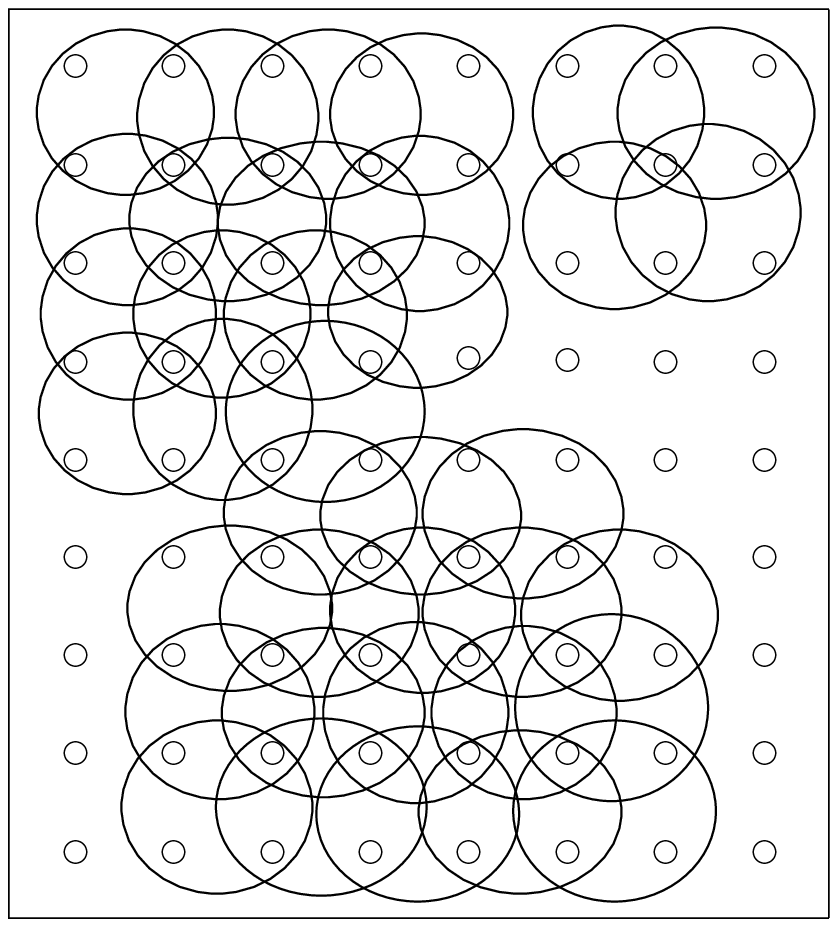}
\caption*{(i) $\bigtriangleup(X^\times)$}
\end{minipage}
\hspace{2cm}
\begin{minipage}[b]{0.15\linewidth}
\centering
\includegraphics[scale=.5]{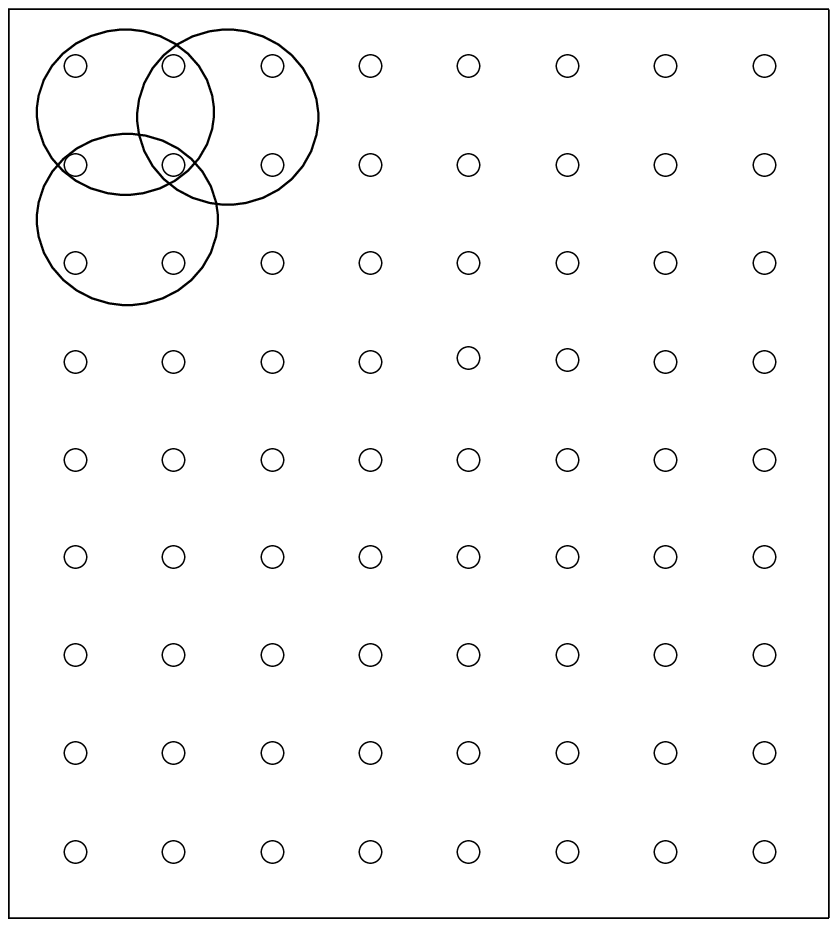}
\caption*{(j) $\varepsilon(X^\times)$}
\end{minipage}
\hspace{2cm}
\begin{minipage}[b]{0.15\linewidth}
\centering
\includegraphics[scale=.5]{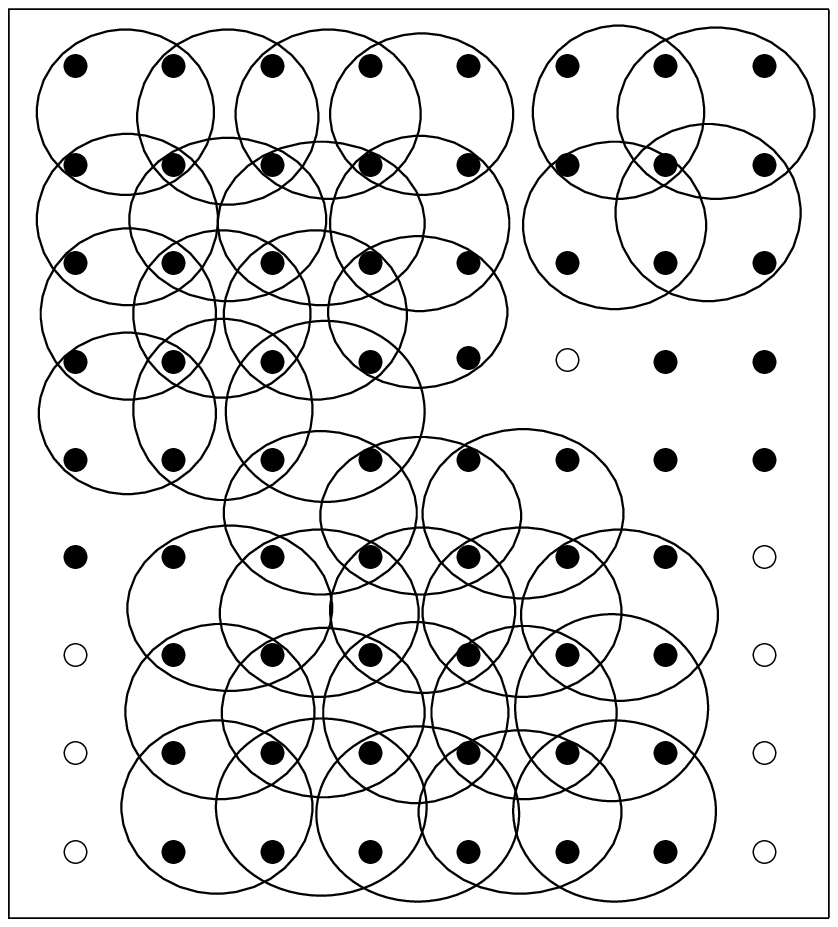}
\caption*{(k) $[\delta, \bigtriangleup](X)$}
\end{minipage}
\hspace{2cm}
\begin{minipage}[b]{0.15\linewidth}
\centering
\includegraphics[scale=.5]{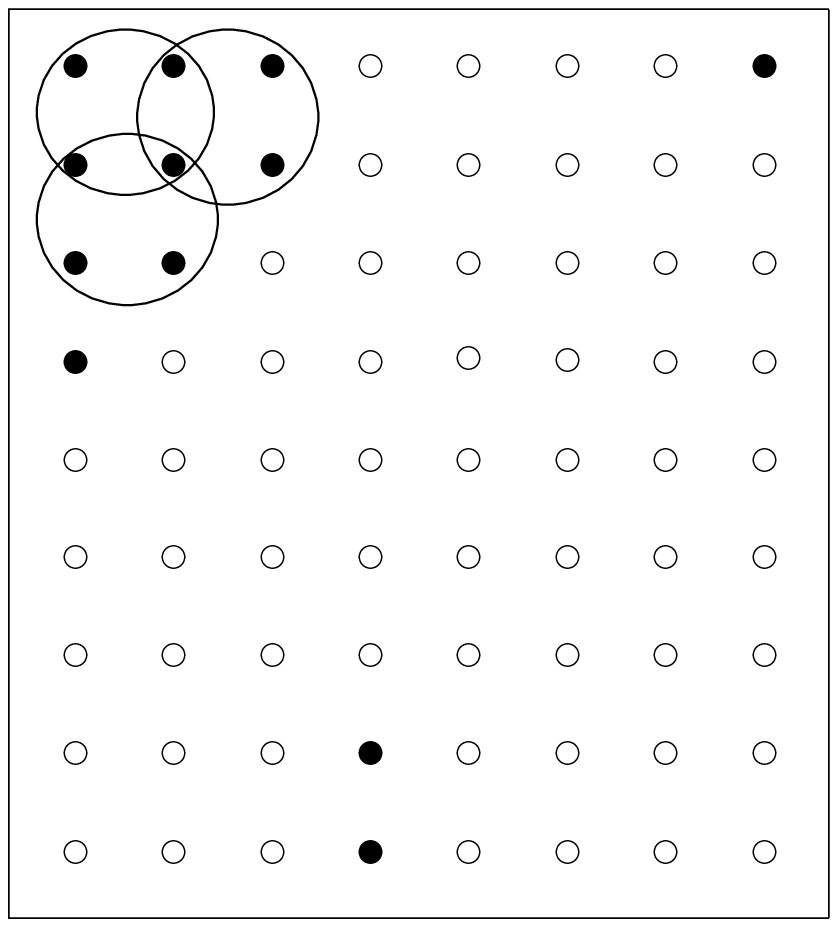}
\caption*{(l) $[\epsilon, \varepsilon](X)$}
\end{minipage}
\caption{Illustration of dilations and erosions}
\end{figure}
\end{center}

\begin{prop}\label{Property1}
For any \vv{X} $\subseteq$ \vv{H} and any \ee{X} $\subseteq $ \ee{H}, where \ee{X} $=(e_{j}), j \in J$ such that $J \subseteq I$
\begin{enumerate}
\item \vv{\delta}: \ee{H} $\rightarrow$ \vv{H} is such that $\delta^\bullet(X^\times)=\underset{j \in J}\cup v(e_{j})$;
\item \ee{\epsilon}: \vv{H} $\rightarrow$ \ee{H} is such that $\epsilon^\times(X^\bullet)=\lbrace e_{i},i \in I | v(e_{i}) \subseteq X^\bullet \rbrace$;
\item \vv{\epsilon}: \ee{H} $\rightarrow$ \vv{H} is such that $\epsilon^\bullet(X^\times)=\underset{j \notin J}\cap \overline{v(e_{j})}$;
\item \ee{\delta}: \vv{H} $\rightarrow$ \ee{H} is such that $\delta^\times(X^\bullet)=\lbrace e_{i},i \in I | v(e_{i}) \cap X^\bullet \neq \phi \rbrace$.
\end{enumerate}
\end{prop}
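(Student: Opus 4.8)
The plan is to prove each of the four identities by directly substituting the explicit descriptions of the induced hypergraphs $H(X^\bullet)$ and $H(X^\times)$ recalled in Section 2.1 into the defining equations of the table in Definition 1, and then reading off the operator by comparing the two coordinates of the hypergraph pairs, equality of hypergraphs being componentwise equality of the vertex set and the hyperedge family. Throughout I would use the fact that $\mathcal{H}^\bullet$ and $\mathcal{H}^\times$ are Boolean algebras, so that complementation is an involution and any implicit definition of the form $\overline{\alpha(X)} = S$ determines $\alpha(X) = \overline{S}$ uniquely; this is what makes the implicit definitions in the table well posed.

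For parts 1 and 2 the operators are defined so as to provide the object itself with a hypergraph structure, and the argument is immediate. Writing $H(X^\times) = (\cup_{j \in J} v(e_j),\, (e_j)_{j \in J})$ and comparing with the defining relation $(\delta^\bullet(X^\times), X^\times) = H(X^\times)$, the first coordinates give $\delta^\bullet(X^\times) = \cup_{j \in J} v(e_j)$, which is part 1. Likewise $H(X^\bullet) = (X^\bullet,\, \{e_i, i \in I \mid v(e_i) \subseteq X^\bullet\})$ together with $(X^\bullet, \epsilon^\times(X^\bullet)) = H(X^\bullet)$ yields, from the second coordinate, $\epsilon^\times(X^\bullet) = \{e_i, i \in I \mid v(e_i) \subseteq X^\bullet\}$, which is part 2.

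Parts 3 and 4 are where the actual work lies, since there the operators are defined through the hypergraph structure on the \emph{complement}, so one must track both the complementation and the re-indexing. For part 3 I would start from $(\overline{\epsilon^\bullet(X^\times)}, \overline{X^\times}) = H(\overline{X^\times})$. The complement $\overline{X^\times}$ is the family $(e_j)_{j \notin J}$, so $H(\overline{X^\times}) = (\cup_{j \notin J} v(e_j),\, (e_j)_{j \notin J})$; comparing first coordinates gives $\overline{\epsilon^\bullet(X^\times)} = \cup_{j \notin J} v(e_j)$, and taking complements together with De Morgan's law $\overline{\cup_{j \notin J} v(e_j)} = \cap_{j \notin J} \overline{v(e_j)}$ produces $\epsilon^\bullet(X^\times) = \cap_{j \notin J} \overline{v(e_j)}$, which is part 3. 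For part 4, starting from $(\overline{X^\bullet}, \overline{\delta^\times(X^\bullet)}) = H(\overline{X^\bullet})$ and using $H(\overline{X^\bullet}) = (\overline{X^\bullet},\, \{e_i, i \in I \mid v(e_i) \subseteq \overline{X^\bullet}\})$, the second coordinates give $\overline{\delta^\times(X^\bullet)} = \{e_i, i \in I \mid v(e_i) \subseteq \overline{X^\bullet}\}$. It then remains to complement this family within $\mathcal{H}^\times$ and rewrite the membership condition via the equivalence $v(e_i) \subseteq \overline{X^\bullet} \Leftrightarrow v(e_i) \cap X^\bullet = \phi$, whose negation is $v(e_i) \cap X^\bullet \neq \phi$; this gives $\delta^\times(X^\bullet) = \{e_i, i \in I \mid v(e_i) \cap X^\bullet \neq \phi\}$, which is part 4.

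The only genuine obstacle I anticipate is bookkeeping rather than anything conceptual: one must be careful that the complement of a hyperedge family is taken relative to $H^\times$ (equivalently, relative to the full index set $I$), so that $\overline{X^\times}$ is indexed by $I \setminus J$, while the set-complement of a vertex subset is taken relative to $H^\bullet$. Keeping these two complementations straight, together with the De Morgan step in part 3 and the membership-condition rewriting in part 4, is all that is needed. Notably, no monotonicity or lattice-theoretic property is invoked here; the proposition is purely a translation of Definition 1 through the explicit formulas for $H(X^\bullet)$ and $H(X^\times)$ from Section 2.1.
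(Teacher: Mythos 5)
Your proposal is correct and follows essentially the same route as the paper: parts 1 and 2 read directly off the definition, part 3 unwinds $H(\overline{X^\times})$ and applies De Morgan's law, and part 4 complements the family $\{e_i \mid v(e_i) \subseteq \overline{X^\bullet}\}$ using the equivalence $v(e_i) \subseteq \overline{X^\bullet} \Leftrightarrow v(e_i) \cap X^\bullet = \phi$. Your explicit attention to which universe each complement is taken in is a welcome clarification, but the substance matches the paper's proof.
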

\begin{proof}
1. and 2. follows from the definition of \vv{\delta} and \ee{\epsilon}.
\begin{enumerate}
\item[3.] $H(\overline{X^\times})=(\underset{j \notin J}\cup v(e_{j}), (e_{j})_{j \notin J})$. Thus 
\begin{eqnarray*}
\epsilon^\bullet(X^\times) & = & \overline{\underset{j \notin J}\cup v(e_{j})}\\
& = & \underset{j \notin J} \cap \overline{v(e_{j})}\text{~~~~~~~~~~~~~~~~~~~~~~~~~~~~~~~~~~~~~~~~~~~~~~~~~~~~~~~~(By De Morgan's Law)}\\
\end{eqnarray*}
\item[4.] $\overline{\delta^\times(X^\bullet)}= \lbrace e_{i}, i \in I |v(e_{i}) \subseteq \overline{X^\bullet} \rbrace)$. Thus $\delta^\times(X^\bullet)=\lbrace e_{i}, i \in I |v(e_{i}) \cap X^\bullet \neq \phi \rbrace$.
\end{enumerate}
\end{proof}

Note that $\delta^\bullet(X^\times)= \lbrace x \in H^\bullet | \exists e_{j} \in X^\times \text{ such that } x \in v(e_{j}) \text{ for some } j \in J \rbrace$. This property states that $\delta^\bullet(X^\times)$ is the set of all vertices which belong to a hyperedge of $X^\times$. $\epsilon^\times(X^\bullet)$ is the set of all hyperedges whose vertices are composed of vertices of $X^\bullet$. $\epsilon^\bullet(X^\times)$ is the set of all vertices which do not belong to any edge of $\overline{X^\times}$, and $\delta^\times(X^\bullet)$ is the set of all hyperedges in $H^\times$ with atleast one vertex in $X^\bullet$. Therefore the previous property locally characterizes the operators defined in vertex-hyperedge correspondence. This property leads to simple linear time algorithms (with respect to $|H^\bullet|$ and $|H^\times|$) to compute $\delta^\bullet$, $\delta^\times$, $\epsilon^\bullet$ and $\epsilon^\times$.

\begin{prop}(dilation, erosion, adjunction, duality)\label{Property2}
\begin{enumerate}
\item Operators \ee{\epsilon} and \ee{\delta} (resp. \vv{\epsilon} and \vv{\delta}) are dual of each other.
\item Both (\ee{\epsilon}, \vv{\delta}) and (\vv{\epsilon}, \ee{\delta}) are adjunctions.
\item Operators \vv{\epsilon} and \ee{\epsilon} are erosions.
\item Operators \vv{\delta} and \ee{\delta} are dilations. 
\end{enumerate}
\end{prop}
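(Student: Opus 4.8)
The plan is to establish the two adjunctions of Part~2 first, because once they are in hand Parts~3 and~4 follow with no further work: the general result recalled in Section~3.2 guarantees that in any adjunction the upper adjoint is an erosion and the lower adjoint is a dilation. Hence, from $(\epsilon^\times,\delta^\bullet)$ being an adjunction I read off that $\epsilon^\times$ is an erosion and $\delta^\bullet$ a dilation, and from $(\epsilon^\bullet,\delta^\times)$ that $\epsilon^\bullet$ is an erosion and $\delta^\times$ a dilation; together these give Parts~3 and~4. The duality relations of Part~1 I would then verify separately by a direct complement computation. Throughout, the only tool needed is the explicit local description of the four operators furnished by Property~\ref{Property1}: the whole proof amounts to unfolding each inclusion and matching the two sides.

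For the adjunction $(\epsilon^\times,\delta^\bullet)$, I fix $X^\times=(e_j)_{j\in J}\subseteq H^\times$ and $Y^\bullet\subseteq H^\bullet$. By Property~\ref{Property1}(1) the inclusion $\delta^\bullet(X^\times)\subseteq Y^\bullet$ reads $\bigcup_{j\in J}v(e_j)\subseteq Y^\bullet$, i.e. $v(e_j)\subseteq Y^\bullet$ for every $j\in J$; by Property~\ref{Property1}(2) the inclusion $X^\times\subseteq\epsilon^\times(Y^\bullet)$ says that each $e_j$ with $j\in J$ lies in $\{e_i\mid v(e_i)\subseteq Y^\bullet\}$, which is again $v(e_j)\subseteq Y^\bullet$ for every $j\in J$. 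The two conditions coincide, giving $\delta^\bullet(X^\times)\subseteq Y^\bullet\Leftrightarrow X^\times\subseteq\epsilon^\times(Y^\bullet)$. For $(\epsilon^\bullet,\delta^\times)$, I fix $X^\bullet\subseteq H^\bullet$ and $Y^\times=(e_j)_{j\in J'}$. By Property~\ref{Property1}(4), $\delta^\times(X^\bullet)\subseteq Y^\times$ says every hyperedge meeting $X^\bullet$ lies in $Y^\times$; contrapositively, every $e_i$ with $i\notin J'$ satisfies $v(e_i)\cap X^\bullet=\emptyset$, i.e. $X^\bullet\subseteq\overline{v(e_i)}$. Intersecting over $i\notin J'$ yields $X^\bullet\subseteq\bigcap_{i\notin J'}\overline{v(e_i)}=\epsilon^\bullet(Y^\times)$ by Property~\ref{Property1}(3), which is the second adjunction.

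For Part~1 I would compute the complements directly. For $\epsilon^\times$ and $\delta^\times$, note $\epsilon^\times(\overline{X^\bullet})=\{e_i\mid v(e_i)\subseteq\overline{X^\bullet}\}=\{e_i\mid v(e_i)\cap X^\bullet=\emptyset\}$; complementing this set of hyperedges inside $H^\times$ gives $\{e_i\mid v(e_i)\cap X^\bullet\neq\emptyset\}=\delta^\times(X^\bullet)$, so $\delta^\times(X^\bullet)=\overline{\epsilon^\times(\overline{X^\bullet})}$. For $\epsilon^\bullet$ and $\delta^\bullet$, replacing $X^\times$ by $\overline{X^\times}$ swaps the index sets $J$ and $I\setminus J$, so $\epsilon^\bullet(\overline{X^\times})=\bigcap_{j\in J}\overline{v(e_j)}$; De~Morgan's law then gives $\overline{\epsilon^\bullet(\overline{X^\times})}=\bigcup_{j\in J}v(e_j)=\delta^\bullet(X^\times)$, the desired duality.

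The computations are all elementary, so no single step is a genuine obstacle; the only place demanding care is the bookkeeping of index sets and complements for the two complement-defined operators $\epsilon^\bullet$ and $\delta^\times$. In the adjunction $(\epsilon^\bullet,\delta^\times)$ one must pass correctly between ``every hyperedge meeting $X^\bullet$ lies in $Y^\times$'' and the intersection formula taken over the complementary index set $I\setminus J'$, and in the duality for $\epsilon^\bullet$ and $\delta^\bullet$ one must track that complementing the argument interchanges $J$ with its complement \emph{before} De~Morgan is applied. Getting the direction of each inclusion and the scope of each complement right is essentially the entire content of the verification.
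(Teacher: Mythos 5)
Your proof is correct, and it uses the same basic ingredients as the paper (the local characterizations from Property~\ref{Property1}, plus the general fact that in an adjunction the upper adjoint is an erosion and the lower adjoint a dilation, which disposes of Parts~3 and~4). The one genuine difference is in how the second adjunction $(\epsilon^\bullet,\delta^\times)$ is obtained: the paper first establishes the dualities of Part~1, proves $(\epsilon^\times,\delta^\bullet)$ by a direct element chase, and then derives $(\epsilon^\bullet,\delta^\times)$ from a chain of equivalences that invokes both dualities and the first adjunction. You instead prove both adjunctions directly from the set-theoretic descriptions and verify the dualities independently afterwards. Your route makes each of the three claims self-contained (so an error in one would not propagate), at the cost of doing the complement bookkeeping for $\epsilon^\bullet$ and $\delta^\times$ twice; the paper's route is slightly more economical because the duality computation is reused, and it illustrates the general principle that an adjunction can be transported across a pair of dualities. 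Both arguments are complete: in particular your contrapositive step for $\delta^\times(X^\bullet)\subseteq Y^\times$ is a genuine equivalence, so the second adjunction is proved in both directions.
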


\begin{proof} 
\begin{enumerate}
\item We will prove that $\overline{\delta^\times(\overline{X^\bullet})} = \epsilon^\times(X^\bullet)$ and $\overline{\delta^\bullet(\overline{X^\times})}= \epsilon^\bullet(X^\times)$
\begin{eqnarray*}
\delta^\times(\overline{X^\bullet}) & = & \lbrace e_{i},i \in I|v(e_{i}) \cap \overline{X^\bullet} \neq \phi \rbrace ~~~~~~~~~~~~~~~~~~~~~~~~~~~~~~~~~~~~~~~~~\text{(By property \ref{Property1} of $\delta^\times$)}\\
\overline{\delta^\times(\overline{X^\bullet})} & = & \lbrace e_{i},i \in I|v(e_{i}) \subseteq X^\bullet \rbrace \\
& = & \epsilon^\times(X^\bullet).
\end{eqnarray*}
Thus $\epsilon^\times$ and $\delta^\times$ are duals.
\begin{eqnarray*}
\delta^\bullet(X^\times) & = & \underset{j \in J}\cup v(e_{j}) \\
\delta^\bullet(\overline{X^\times}) & = & \underset{j \notin J}\cup v(e_{j}) \\
\overline{\delta^\bullet(\overline{X^\times})} & = & \overline{\underset{j \notin J}\cup v(e_{j})} \\ 
& = & \underset{j \notin J} \cap \overline{v(e_{j})} ~~~~~~~~~~~~~~~~~~~~~~~~~~~~~~~~~~~~~~~~~~~~~~~~~~~~~~~~\text{(By De Morgan's Law)}\\
& = & \epsilon^\bullet(X^\times).
\end{eqnarray*}
Therefore $\epsilon^\bullet$ and $\delta^\bullet$ are duals.
\item Suppose that $X^\times \subseteq \epsilon^\times(Y^\bullet)$. Then 
\begin{eqnarray*}
x \in \delta^\bullet(X^\times) & \Rightarrow & x \in \underset{j \in J}\cup v(e_{j})\\
& \Rightarrow & x \in v(e_{j}) \text{ for some } j \in J\\
& \Rightarrow & \exists ~e \in X^\times \text{ such that } x \in v(e)\\
& \Rightarrow & e \in \epsilon^\times(Y^\bullet)~~~~~~~~~~~~~~~~~~~~~~~~~~~~~~~~~~~~~~~~~~~~~~~~~~~~~~~~(\because X^\times \subseteq \epsilon^\times(Y^\bullet))\\
& \Rightarrow & e \in \lbrace e_{i}, i \in I |v(e_{i}) \subseteq Y^\bullet \rbrace\\
& \Rightarrow & v(e) \subseteq Y^\bullet \\
& \Rightarrow & x \in Y^\bullet ~~~~~~~~~~~~~~~~~~~~~~~~~~~~~~~~~~~~~~~~~~~~~~~~~~~~~~~~~~~~~~~~~~~~(\because x \in v(e))
\end{eqnarray*}
Therefore $\delta^\bullet(X^\times) \subseteq Y^\bullet$.\\
Conversly, if $\delta^\bullet(X^\times) \subseteq Y^\bullet$. Then 
\begin{eqnarray*}
e \in X^\times & \Rightarrow & v(e) \subseteq \delta^\bullet(X^\times)\\
& \Rightarrow & v(e) \subseteq Y^\bullet ~~~~~~~~~~~~~~~~~~~~~~~~~~~~~~~~~~~~~~~~~~~~~~~~~~~~~~~~~~~~~(\because \delta^\bullet(X^\times) \subseteq Y^\bullet)\\
& \Rightarrow & e \in \epsilon^\times(Y^\bullet)
\end{eqnarray*} 
Thus $X^\times \subseteq \epsilon^\times(Y^\bullet)$.
Therefore (\ee{\epsilon}, \vv{\delta}) is an adjunction.

\begin{eqnarray*}
\delta^\times(X^\bullet) \subseteq Y^\times & \Leftrightarrow & \overline{\epsilon^\times(\overline{X^\bullet})} \subseteq Y^\times ~~~~~~~~~~~~~~~~~~~~~~~~~~~~~~~~~~~~~~~~~~\text{(By duality of $\epsilon^\times$ and $\delta^\times$)}\\
& \Leftrightarrow & \overline{Y^\times} \subseteq \epsilon^\times(\overline{X^\bullet})\\
& \Leftrightarrow & \delta^\bullet(\overline{Y^\times}) \subseteq \overline{X^\bullet}~~~~~~~~~~~~~~~~~~~~~~~~~~~~~\text{(By adjunction property of (\ee{\epsilon}, \vv{\delta}))}\\
& \Leftrightarrow & X^\bullet \subseteq \overline{\delta^\bullet(\overline{Y^\times})}\\
& \Leftrightarrow & X^\bullet \subseteq \epsilon^\bullet(Y^\times)~~~~~~~~~~~~~~~~~~~~~~~~~~~~~~~~~~~~~~~~~~~~\text{(By duality of $\epsilon^\bullet$ and $\delta^\bullet$)}
\end{eqnarray*}
Therefore (\vv{\epsilon}, \ee{\delta}) is an adjunction.
\end{enumerate}
Properties 3. and 4. follows from the dilation / erosion property of adjunctions.
\end{proof}

\begin{defn}(\textbf{vertex dilation, vertex erosion}).
We define $\delta$ and $\epsilon$ that act on \X{H} by $\delta=\delta^\bullet \circ \delta^\times$ and $\epsilon=\epsilon^\bullet \circ \epsilon^\times$.
\end{defn}

\begin{prop}\label{Property3}
For any $X^\bullet \subseteq H^\bullet$.
\begin{enumerate}
\item $\delta(X^\bullet) = \lbrace x \in H^\bullet | ~\exists ~e_{i}, i \in I \text{ such that } x \in v(e_{i}) \text{ and } v(e_{i}) \cap X^\bullet \neq \phi \rbrace$.
\item $\epsilon(X^\bullet) = \lbrace x \in H^\bullet | ~\forall ~e_{i}, i \in I \text{ such that } x \in v(e_{i}), v(e_{i}) \subseteq X^\bullet \rbrace$.
\end{enumerate}
\end{prop}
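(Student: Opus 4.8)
The plan is to establish both identities by unfolding the compositions $\delta = \delta^\bullet \circ \delta^\times$ and $\epsilon = \epsilon^\bullet \circ \epsilon^\times$ and substituting the explicit pointwise formulas obtained in Property~\ref{Property1}. Since each of the four elementary operators already admits a local description, the two statements should reduce to routine manipulations of the quantifiers ``there exists a hyperedge'' and ``for all hyperedges'', with no need to invoke the lattice-theoretic properties directly.

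For the first identity I would start from $\delta^\times(X^\bullet) = \lbrace e_{i} \mid v(e_{i}) \cap X^\bullet \neq \phi \rbrace$, naming its index set $K = \lbrace i \in I \mid v(e_{i}) \cap X^\bullet \neq \phi \rbrace$ explicitly. Applying $\delta^\bullet$ to this set of hyperedges and using $\delta^\bullet(Y^\times) = \bigcup_{j} v(e_{j})$ taken over the index set of $Y^\times$, I obtain $\delta(X^\bullet) = \bigcup_{i \in K} v(e_{i})$. A vertex $x$ lies in this union precisely when there is some $e_{i}$ with $x \in v(e_{i})$ and $i \in K$, that is $v(e_{i}) \cap X^\bullet \neq \phi$, which is exactly the set described in item~1.

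For the second identity the cleanest route is duality rather than a second direct computation. Composing the duality relations of Property~\ref{Property2}, namely $\epsilon^\times(X^\bullet) = \overline{\delta^\times(\overline{X^\bullet})}$ and $\epsilon^\bullet(X^\times) = \overline{\delta^\bullet(\overline{X^\times})}$, gives $\epsilon(X^\bullet) = \overline{\delta^\bullet(\overline{\epsilon^\times(X^\bullet)})} = \overline{\delta(\overline{X^\bullet})}$, so that $\epsilon$ is the dual of $\delta$. Item~2 then follows from item~1 by taking complements: $x \notin \delta(\overline{X^\bullet})$ means that no hyperedge containing $x$ meets $\overline{X^\bullet}$, equivalently that every $e_{i}$ with $x \in v(e_{i})$ satisfies $v(e_{i}) \subseteq X^\bullet$, which is the description in item~2. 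One may alternatively argue directly, computing $\epsilon^\times(X^\bullet) = \lbrace e_{i} \mid v(e_{i}) \subseteq X^\bullet \rbrace$ with index set $K'$ and then applying $\epsilon^\bullet(Y^\times) = \bigcap_{j \notin J} \overline{v(e_{j})}$.

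I expect the only delicate point to be the quantifier flip required by the direct route: the membership $x \in \bigcap_{i \notin K'} \overline{v(e_{i})}$ asserts that $x$ avoids every hyperedge with $v(e_{i}) \not\subseteq X^\bullet$, and one must pass to the contrapositive to rephrase this as ``every hyperedge $e_{i}$ containing $x$ has $v(e_{i}) \subseteq X^\bullet$'', taking care that the vacuous case (where $x$ lies in no hyperedge) is correctly included. The duality argument sidesteps this reindexing entirely, so I would present item~2 as a consequence of item~1 through the relation $\epsilon(X^\bullet) = \overline{\delta(\overline{X^\bullet})}$.
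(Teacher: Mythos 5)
Your argument for item~1 is exactly the paper's: unfold $\delta=\delta^\bullet\circ\delta^\times$, substitute the local formulas of Property~\ref{Property1}, and read off membership in the union $\bigcup_{i\in I,\; v(e_{i})\cap X^\bullet\neq\phi} v(e_{i})$. For item~2 you take a genuinely different route. The paper repeats the direct computation: it applies $\epsilon^\bullet$ to $\epsilon^\times(X^\bullet)=\lbrace e_{i}\mid v(e_{i})\subseteq X^\bullet\rbrace$, obtains $\bigcap_{i\in I,\; v(e_{i})\nsubseteq X^\bullet}\overline{v(e_{i})}$, and then performs precisely the quantifier flip you flag as delicate. You instead compose the two duality relations of Property~\ref{Property2} to get $\epsilon(X^\bullet)=\overline{\delta(\overline{X^\bullet})}$ and deduce item~2 from item~1 by complementation; this is valid, since both dualities are established before Property~\ref{Property3} (the paper itself only records the duality of the composites $\delta$ and $\epsilon$ in a later remark). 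What each buys: the paper's direct route is self-contained and does not lean on Property~\ref{Property2}, while your duality route avoids the reindexing over $\lbrace i\mid v(e_{i})\nsubseteq X^\bullet\rbrace$ and handles the vacuous case cleanly. Your caution about that case is in fact warranted beyond what you may have realized: an isolated vertex $x$ belongs to every $\overline{v(e_{i})}$ and hence to $\epsilon(X^\bullet)$ whether or not $x\in X^\bullet$, which agrees with the statement's $\lbrace x\in H^\bullet\mid\cdots\rbrace$ but not with the final line of the paper's own computation, which restricts to $\lbrace x\in X^\bullet\mid\cdots\rbrace$; your complementation argument lands on the correct ($H^\bullet$) version automatically.
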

\begin{proof}
\begin{enumerate}
\item 
\begin{eqnarray*}
\delta(X^\bullet) & = & \delta^\bullet ( \delta^\times(X^\bullet))\\
& = & \delta^\bullet[\lbrace e_{i}, i \in I | v(e_{i}) \cap X^\bullet \neq \phi \rbrace] ~~~~~~~~~~~~~~~~~~~~~~~~~~~~~~~\text{ (By property \ref{Property1} of $\delta^\times$})\\
& = & \underset{i \in I, v(e_{i}) \cap X^\bullet \neq \phi}\cup v(e_{i}) ~~~~~~~~~~~~~~~~~~~~~~~~~~~~~~~~~~~~~~~~~~~~~~\text{ (By property \ref{Property1} of $\delta^\bullet$})\\
& = & \lbrace x \in H^\bullet | ~\exists e_{i}, i \in I \text{ such that } x \in v(e_{i}) \text{ and } v(e_{i}) \cap X^\bullet \neq \phi \rbrace.
\end{eqnarray*}
\item 
\begin{eqnarray*}
\epsilon(X^\bullet) & = & \epsilon^\bullet(\epsilon^\times(X^\bullet))\\
& = & \epsilon^\bullet[\lbrace e_{i}, i \in I | v(e_{i}) \subseteq X^\bullet) \rbrace] ~~~~~~~~~~~~~~~~~~~~~~~~~~~~~~~~~~~~~~\text{ (By property \ref{Property1} of $\epsilon^\times$})\\
& = & \underset{i \in I, v(e_{i}) \nsubseteq X^\bullet}\cap \overline{v(e_{i})} ~~~~~~~~~~~~~~~~~~~~~~~~~~~~~~~~~~~~~~~~~~~~~~~~~~~~\text{ (By property \ref{Property1} of $\epsilon^\bullet$})\\
& = & \lbrace x \in X^\bullet | ~\forall e_{i} \in H^\times \text{ with } x \in v(e_{i}), v(e_{i}) \subseteq X^\bullet \rbrace
\end{eqnarray*}
\end{enumerate}
\end{proof}

\begin{defn}(\textbf{hyper-edge dilation, hyper-edge erosion})
We define $\bigtriangleup$ and $\varepsilon$ that act on \ee{\mathcal{H}} by $\bigtriangleup = \delta^\times \circ \delta^\bullet$ and $\varepsilon = \epsilon^\times \circ \epsilon^\bullet$.
\end{defn}

\begin{prop}\label{Property4}
For any $X^\times \subseteq H^\times$, $X^\times=(e_{j})_{j \in J}$.
\begin{enumerate}
\item $\bigtriangleup(X^\times) = \lbrace e_{i}, i \in I |~\exists e_{j}, j \in J \text{ such that } v(e_{i}) \cap v(e_{j}) \neq \phi \rbrace$.
\item $\varepsilon(X^\times) = \lbrace e_{j}, j \in J | v(e_{j}) \cap v(e_{i}) = \phi, \forall i \in I \setminus J \rbrace$.
\end{enumerate}
\end{prop}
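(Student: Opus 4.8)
The plan is to treat both parts as direct computations, unfolding each composite operator through its definition and then substituting the closed forms for $\delta^\bullet$, $\delta^\times$, $\epsilon^\bullet$, $\epsilon^\times$ already established in Property \ref{Property1}.

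For part 1, I would begin from $\bigtriangleup(X^\times) = \delta^\times(\delta^\bullet(X^\times))$ and substitute $\delta^\bullet(X^\times) = \bigcup_{j \in J} v(e_j)$. Applying the formula $\delta^\times(Y^\bullet) = \{e_i, i \in I \mid v(e_i) \cap Y^\bullet \neq \phi\}$ with $Y^\bullet = \bigcup_{j \in J} v(e_j)$ selects exactly the hyperedges $e_i$ that meet $\bigcup_{j \in J} v(e_j)$. The only step needing justification is the equivalence
\[
v(e_i) \cap \bigcup_{j \in J} v(e_j) \neq \phi \iff \exists\, j \in J \text{ such that } v(e_i) \cap v(e_j) \neq \phi,
\]
which is merely distributivity of intersection over union; this gives the stated characterization at once.

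For part 2, I would proceed analogously from $\varepsilon(X^\times) = \epsilon^\times(\epsilon^\bullet(X^\times))$, substituting $\epsilon^\bullet(X^\times) = \bigcap_{j \notin J} \overline{v(e_j)}$ and then applying $\epsilon^\times(Y^\bullet) = \{e_i, i \in I \mid v(e_i) \subseteq Y^\bullet\}$. The membership condition $v(e_i) \subseteq \bigcap_{j \notin J} \overline{v(e_j)}$ unpacks to $v(e_i) \subseteq \overline{v(e_j)}$ for every $j \in I \setminus J$, equivalently $v(e_i) \cap v(e_j) = \phi$ for all $j \in I \setminus J$. At this stage the computation produces a set indexed over all $i \in I$, while the statement restricts the surviving edges to indices in $J$.

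Reconciling this index range is the only substantive point, and I expect it to be the main obstacle. I would argue that any $e_i$ with $v(e_i) \cap v(e_j) = \phi$ for all $j \in I \setminus J$ must satisfy $i \in J$: were $i \in I \setminus J$, the choice $j = i$ would force $v(e_i) = v(e_i) \cap v(e_i) = \phi$, contradicting the convention (following Berge) that every hyperedge is nonempty. Thus the index set collapses from $I$ to $J$, and after renaming the dummy indices to match the statement one obtains precisely $\varepsilon(X^\times) = \{e_j, j \in J \mid v(e_j) \cap v(e_i) = \phi, \forall i \in I \setminus J\}$. The reverse inclusion is immediate since $J \subseteq I$. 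I would flag the nonemptiness assumption explicitly, since without it an empty hyperedge indexed outside $J$ would survive the computation yet violate the claimed form.
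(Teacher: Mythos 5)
Your proof follows essentially the same route as the paper's: both parts are direct computations that unfold $\bigtriangleup = \delta^\times \circ \delta^\bullet$ and $\varepsilon = \epsilon^\times \circ \epsilon^\bullet$ and substitute the closed forms from Property \ref{Property1}, and your part 1 is line-for-line the paper's argument. The one place you go beyond the paper is in part 2: the paper silently writes the resulting index set as $j \in J$ after applying $\epsilon^\times$ (whose formula ranges over all of $I$), whereas you explicitly justify the collapse from $I$ to $J$ by observing that an index $i \in I \setminus J$ would force $v(e_i) = v(e_i) \cap v(e_i) = \phi$, and you correctly flag that this relies on the convention that hyperedges are nonempty --- a small gap the paper leaves implicit.
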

\begin{proof}
\begin{enumerate}
\item 
\begin{eqnarray*}
\bigtriangleup(X^\times) & = & \delta^\times \circ \delta^\bullet(X^\times)\\
& = & \delta^\times [\underset{j \in J}\cup v(e_{j})]~~~~~~~~~~~~~~~~~~~~~~~~~~~~~~~~~~~~~~~~~~~~~~~~~~~~~~~~~~~~\text{ (By property \ref{Property1} of $\delta^\times$)}\\
& = & \lbrace e_{i}, i \in I | v(e_{i}) \cap [\underset{j \in J}\cup v(e_{j})] \neq \phi \rbrace.~~~~~~~~~~~~~~~~~~~~~~~~~~~~~~~~~~\text{ (By property \ref{Property1} of $\delta^\bullet$)}\\
& = & \lbrace e_{i}, i \in I | \exists e_{j}, j \in J \text{ such that } v(e_{i}) \cap v(e_{j}) \neq \phi \rbrace.
\end{eqnarray*}
\item
\begin{eqnarray*}
\varepsilon(X^\times) & = & \epsilon^\times \circ \epsilon^\bullet(X^\times)\\
& = & \epsilon^\times[\underset{i \in I \setminus J} \cap \overline{v(e_{i})}]~~~~~~~~~~~~~~~~~~~~~~~~~~~~~~~~~~~~~~~~~~~~~~~~~~~~~~~~~~~\text{ (By property \ref{Property1} of $\epsilon^\bullet$)}\\
& = & \lbrace e_{j}, j \in J | v(e_{j}) \subseteq [\underset{i \in I \setminus J}{\cap}\overline{v(e_{i})}]\\
& = & \lbrace e_{j}, j \in J | v(e_{j}) \subseteq [\overline{\underset{i \in I \setminus J}{\cup}v(e_{i})}] \rbrace ~~~~~~~~~~~~~~~~~~~~~~~~~~~~~~~~~~~\text{ (By De Morgan's Law)}\\
& = & \lbrace e_{j}, j \in J | v(e_{j}) \cap v(e_{i}) = \phi, \forall i \in I \setminus J \rbrace
\end{eqnarray*}
\end{enumerate}
\end{proof}

\begin{rem}
Being the compositions of respectively dilations and erosions, $\delta$ and $\epsilon$ are respectively a dilation and an erosion \cite{heijmans1997composing}. Moreover by composition of adjunctions and dual operators, $\delta$ and $\epsilon$ are dual and $(\epsilon, \delta)$ is an adjunction. In a similar manner $(\varepsilon, \bigtriangleup)$ is also an adjunction.
\end{rem}

\begin{defn}(\textbf{hypergraph dilation, hypergraph erosion})
We define the operators $[\delta, \bigtriangleup]$ and $[\epsilon, \varepsilon]$ by respectively $[\delta, \bigtriangleup](X) = (\delta(X^\bullet), \bigtriangleup(X^\times))$ and $[\epsilon, \varepsilon](X) = (\epsilon(X^\bullet), \varepsilon(X^\times))$, for any $X \in \mathcal{H}$.
\end{defn}
\begin{Theorem}
The operators $[\delta, \bigtriangleup]$ and $[\epsilon, \varepsilon]$ are respectively a dilation and an erosion acting on the lattice $(\mathcal{H}, \subseteq)$. 
\end{Theorem}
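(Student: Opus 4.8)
The plan is to exploit the fact that the order on $\mathcal{H}$ is inherited componentwise from the two Boolean algebras $\mathcal{H}^\bullet$ and $\mathcal{H}^\times$, so that suprema and infima of families of subhypergraphs are computed coordinate by coordinate, and then to transport the dilation/erosion properties of the coordinate operators $\delta,\bigtriangleup$ and $\epsilon,\varepsilon$ (established in the preceding Remark) through these componentwise lattice operations.

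First I would pin down the lattice operations on $(\mathcal{H},\subseteq)$. For a family $\lbrace X_k=(X_k^\bullet,X_k^\times)\rbrace_k$ of subhypergraphs I claim the supremum is $(\bigcup_k X_k^\bullet,\bigcup_k X_k^\times)$ and the infimum is $(\bigcap_k X_k^\bullet,\bigcap_k X_k^\times)$. The only point needing care is that each pair is again a subhypergraph, i.e.\ that every edge of the edge-component has all its vertices inside the vertex-component. For the union this is immediate: an edge $e\in\bigcup_k X_k^\times$ lies in some $X_k^\times$, whence $v(e)\subseteq X_k^\bullet\subseteq\bigcup_k X_k^\bullet$. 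For the intersection, $e\in\bigcap_k X_k^\times$ forces $v(e)\subseteq X_k^\bullet$ for every $k$, hence $v(e)\subseteq\bigcap_k X_k^\bullet$. Thus $\mathcal{H}$ is closed under componentwise union and intersection, which are its join and meet.

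Next comes the genuinely substantive step, which I expect to be the main obstacle: verifying that $[\delta,\bigtriangleup]$ and $[\epsilon,\varepsilon]$ actually land in $\mathcal{H}$, i.e.\ that their outputs are subhypergraphs rather than arbitrary vertex/edge pairs. This is where the defining constraint of $\mathcal{H}$ (edges contained in the vertex set) interacts with the local descriptions of Properties~\ref{Property3} and~\ref{Property4}. For $[\delta,\bigtriangleup](X)$, take any $e\in\bigtriangleup(X^\times)$; by Property~\ref{Property4} there is $e_j\in X^\times$ with $v(e)\cap v(e_j)\neq\phi$, and since $X$ is a subhypergraph, $v(e_j)\subseteq X^\bullet$, so $v(e)\cap X^\bullet\neq\phi$. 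Then every $x\in v(e)$ meets the defining condition of $\delta$ in Property~\ref{Property3} via the edge $e$ itself, giving $v(e)\subseteq\delta(X^\bullet)$. Dually, for $e_j\in\varepsilon(X^\times)$ one has $v(e_j)\subseteq X^\bullet$ and $v(e_j)$ disjoint from every edge outside $X^\times$; hence each vertex of $e_j$ lies only in edges of $X^\times$, all of which are contained in $X^\bullet$, which is exactly the membership condition for $\epsilon(X^\bullet)$, so $v(e_j)\subseteq\epsilon(X^\bullet)$. This shows both operators map $\mathcal{H}$ into $\mathcal{H}$.

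Finally I would conclude distributivity purely formally. Because the join on $\mathcal{H}$ is componentwise, $[\delta,\bigtriangleup](\bigvee_k X_k)=(\delta(\bigcup_k X_k^\bullet),\bigtriangleup(\bigcup_k X_k^\times))$; since $\delta$ and $\bigtriangleup$ are dilations on $\mathcal{H}^\bullet$ and $\mathcal{H}^\times$ they commute with the respective unions, so this equals $(\bigcup_k\delta(X_k^\bullet),\bigcup_k\bigtriangleup(X_k^\times))=\bigvee_k[\delta,\bigtriangleup](X_k)$. Specializing to the empty family gives $[\delta,\bigtriangleup](H_\phi)=(\delta(\phi),\bigtriangleup(\phi))=H_\phi$, so $[\delta,\bigtriangleup]$ is a dilation. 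The argument for $[\epsilon,\varepsilon]$ is identical with $\cup$ replaced by $\cap$, the erosion property of $\epsilon$ and $\varepsilon$ in place of the dilation property, and the greatest element $H$ in place of $H_\phi$, yielding that $[\epsilon,\varepsilon]$ is an erosion. Once the well-definedness step is in hand, the distributivity is a routine componentwise transfer.
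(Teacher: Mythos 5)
Your proof is correct, and its decisive step---checking that $[\delta,\bigtriangleup]$ and $[\epsilon,\varepsilon]$ actually map $\mathcal{H}$ into $\mathcal{H}$, i.e.\ that $v(e)\subseteq\delta(X^\bullet)$ whenever $e\in\bigtriangleup(X^\times)$ and $v(e)\subseteq\epsilon(X^\bullet)$ whenever $e\in\varepsilon(X^\times)$---is exactly the content of the paper's own proof. The difference is one of completeness rather than of route: the paper proves only this well-definedness and stops there, leaving the dilation/erosion property itself implicit (it effectively falls out of the componentwise definition together with the Remark, or from the adjunction established in the next theorem), whereas you go on to identify the join and meet of $(\mathcal{H},\subseteq)$ as componentwise union and intersection (verifying that $\mathcal{H}$ is closed under both), and then transfer the distributivity of $\delta,\bigtriangleup$ over unions and of $\epsilon,\varepsilon$ over intersections coordinate by coordinate, including the empty-family cases $[\delta,\bigtriangleup](H_{\phi})=H_{\phi}$ and $[\epsilon,\varepsilon](H)=H$. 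That extra bookkeeping is genuinely needed for the statement as literally phrased, so your version is the more self-contained one. Your handling of the erosion case is also slightly more direct: you verify the membership condition of $\epsilon(X^\bullet)$ pointwise for each vertex of an edge of $\varepsilon(X^\times)$ (any edge through such a vertex must meet $v(e_j)$, hence lies in $X^\times$, hence has its vertices in $X^\bullet$), where the paper chains set inclusions through $\underset{i \in I \setminus J}\cap \overline{v(e_{i})}$; both arguments are valid.
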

\begin{proof}
We will prove that for every $e \in \bigtriangleup(X^\times)$, $v(e) \subseteq \delta(X^\bullet)$. $e \in \bigtriangleup(X^\times)$ implies, there exists some $j \in J$ such that $v(e) \cap v(e_{j}) \neq \phi$. But $v(e_{j}) \subseteq X^\bullet$, since $j \in J$. Thus $v(e) \cap X^\bullet \neq \phi$. Therefore $v(e) \subseteq \underset{j \in J, v(e_{i}) \cap X^\bullet \neq \phi}\cup v(e_{i}) = \delta(X^\bullet)$. This implies $[\delta, \bigtriangleup](X) \in \mathcal{H}$.
\\If  $e \in \varepsilon(X^\times)$, then $v(e) \cap v(e_{i}) = \phi$ for every $i \in I \setminus J$, and so $v(e) \cap [\underset{i \in I \setminus J}\cup v(e_{i})] = \phi$.
\begin{eqnarray*}
v(e) & \subseteq & \overline{\underset{i \in I \setminus J}\cup v(e_{i})}\\
& = & \underset{i \in I \setminus J}\cap \overline{v(e_{i})}\\
& \subseteq & \underset{v(e_{i}) \nsubseteq X^\bullet}\cap \overline{v(e_{i})}  \text{~~~~~~~~~~~~~~~~~~~~~~~~~~~~~~~~~~~~~~~~~~~~~~~~~~~~~~~~~~~(Since $v(e_{i}) \subseteq X^\bullet, \forall i \in J)$}\\
& = & \varepsilon(X^\bullet)\text{~~~~~~~~~~~~~~~~~~~~~~~~~~~~~~~~~~~~~~~~~~~~~~~~~~~~~~~~~~~~~~~~~~~~~~~~~~(By Property \ref{Property3} of $\varepsilon$)}
\end{eqnarray*}
Therefore $[\epsilon, \varepsilon](X) \in \mathcal{H}$.
\end{proof}
\begin{Theorem}
$([\epsilon, \varepsilon], [\delta, \bigtriangleup])$ is an adjunction.
\end{Theorem}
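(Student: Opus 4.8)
The plan is to reduce the claimed adjunction on the lattice $(\mathcal{H}, \subseteq)$ to the two component-wise adjunctions already recorded in the preceding Remark: the adjunction $(\epsilon, \delta)$ living on the Boolean algebra $\mathcal{H}^\bullet$ of vertex subsets, and the adjunction $(\varepsilon, \bigtriangleup)$ living on the Boolean algebra $\mathcal{H}^\times$ of hyperedge subsets. Concretely, I must establish that for all $X, Y \in \mathcal{H}$,
\[
[\delta, \bigtriangleup](X) \subseteq Y \iff X \subseteq [\epsilon, \varepsilon](Y).
\]
The starting point is the observation that the order on $\mathcal{H}$ is inherited component-wise: by the definition of subhypergraph, for $X, Y \in \mathcal{H}$ one has $X \subseteq Y$ precisely when $X^\bullet \subseteq Y^\bullet$ and $X^\times \subseteq Y^\times$.

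First I would unfold both sides of the target equivalence using the definitions $[\delta, \bigtriangleup](X) = (\delta(X^\bullet), \bigtriangleup(X^\times))$ and $[\epsilon, \varepsilon](Y) = (\epsilon(Y^\bullet), \varepsilon(Y^\times))$ together with the component-wise order. This rewrites the left-hand side as the conjunction $\delta(X^\bullet) \subseteq Y^\bullet$ and $\bigtriangleup(X^\times) \subseteq Y^\times$, and the right-hand side as the conjunction $X^\bullet \subseteq \epsilon(Y^\bullet)$ and $X^\times \subseteq \varepsilon(Y^\times)$. I would then apply the two known adjunctions independently: since $(\epsilon, \delta)$ is an adjunction on $\mathcal{H}^\bullet$, the vertex conjunct on the left is equivalent to the vertex conjunct on the right; since $(\varepsilon, \bigtriangleup)$ is an adjunction on $\mathcal{H}^\times$, the two hyperedge conjuncts match as well. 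Taking the conjunction of these two equivalences delivers the theorem.

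The one point requiring care — and the reason this is not simply the generic fact that a product of adjunctions is an adjunction — is that $\mathcal{H}$ is a \emph{proper} subset of the product $\mathcal{H}^\bullet \times \mathcal{H}^\times$: a pair $(Z^\bullet, Z^\times)$ belongs to $\mathcal{H}$ only when it is a genuine subhypergraph, that is, $v(e) \subseteq Z^\bullet$ for every $e \in Z^\times$. The reduction above is legitimate precisely because the previous theorem guarantees that $[\delta, \bigtriangleup](X)$ and $[\epsilon, \varepsilon](Y)$ are themselves elements of $\mathcal{H}$, so that the component-wise reformulation of the order applies to both $[\delta, \bigtriangleup](X) \subseteq Y$ and $X \subseteq [\epsilon, \varepsilon](Y)$. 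Thus the main (and essentially only) obstacle, namely the well-definedness of the two operators as self-maps of $\mathcal{H}$, has already been removed by that theorem, and what remains is the routine bookkeeping of splitting the order condition into its vertex and hyperedge parts and recombining the resulting equivalences.
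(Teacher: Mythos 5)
Your proof is correct and follows essentially the same route as the paper: unfold both operators, use the component-wise nature of the subhypergraph order, and apply the adjunctions $(\epsilon,\delta)$ on $\mathcal{H}^\bullet$ and $(\varepsilon,\bigtriangleup)$ on $\mathcal{H}^\times$ to each conjunct separately. Your explicit remark that the argument is only legitimate because the preceding theorem shows $[\delta,\bigtriangleup]$ and $[\epsilon,\varepsilon]$ are self-maps of $\mathcal{H}$ (which is a proper sublattice of $\mathcal{H}^\bullet\times\mathcal{H}^\times$) is a point the paper's proof leaves implicit, and is a welcome addition.
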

\begin{proof}
Let $X$ and $Y$ are two hypergraphs in $\mathcal{H}$. The following statements are equivalent.
\begin{gather*}
[\delta, \bigtriangleup](X) \subseteq Y \\
\delta(X^\bullet) \subseteq Y^\bullet ~\text{and $\bigtriangleup(X^\times) \subseteq Y^\times$}\\
X^\bullet \subseteq \epsilon(Y^\bullet) ~\text{and } X^\times \subseteq \varepsilon(Y^\times) ~~~~~~~~~~~~~~~~~~~~~~~~~\text{(since $[\epsilon, \varepsilon]$ and $[\delta, \bigtriangleup]$ are adjunctions on $\mathcal{H}$)}\\
X \subseteq [\epsilon, \varepsilon](Y)
\end{gather*}
Thus the pair $([\epsilon, \varepsilon], [\delta, \bigtriangleup])$ is an adjunction, which implies that $[\epsilon, \varepsilon]$ is an erosion and $[\delta, \bigtriangleup]$ is a dilation.
\end{proof}
\section{Filters}
In mathematical morphology, a $filter$ \cite{cousty2013morphological}, \cite{najman2013mathematical} is an operator $\alpha$ acting on a lattice $\mathcal{L}$, which is increasing ($i.e. \forall X, Y \in \mathcal{L}, X \leq Y \implies \alpha(X) \leq \alpha(Y) $) and idempotent ($i.e. \forall X \in \mathcal{L}, \alpha( \alpha (X)) = \alpha(X)$). A filter on $\mathcal{L}$ which is extensive ($i.e. \forall X \in \mathcal{L}, X \leq \alpha(X)$) is called a $closing$ on $\mathcal{L}$ and a filter on $\mathcal{L}$ which is anti-extensive ($i.e. \forall X \in \mathcal{L}, \alpha(X) \leq X$) is called an $opening$. If $(\alpha, \beta)$ is an adjunction then $\alpha$ is an erosion, $\beta$ is a dilation, $\beta \circ \alpha$ is an opening and $\alpha \circ \beta$ is a closing on $\mathcal{L}$.
\begin{defn}(\textbf{opening, closing}).
\begin{itemize}
\item[1.]
We define $\gamma_{1}$ and $\phi_{1}$, that act on $\mathcal{H^\bullet}$, by $\gamma_{1}=\delta \circ \epsilon$ and $\phi_{1}=\epsilon \circ \delta$.
\item[2.]
We define $\Gamma_{1}$ and $\Phi_{1}$, that act on $\mathcal{H^\times}$, by $\Gamma_{1}=\Delta \circ \varepsilon$ and $\phi_{1}=\varepsilon \circ \Delta$.
\item[3.]
We define $[\gamma,\Gamma]_{1}$ and $[\phi,\Phi]_{1}$, that act on $\mathcal{H}$ by respectively $[\gamma,\Gamma]_1(X)=(\gamma_{1}(X^\bullet),\Gamma_{1}(X^\times))$ and $[\phi,\Phi]_{1}(X)=(\phi_{1}(X^\bullet),\Phi_{1}(X^\times))$ for any $X \in \mathcal{H}$.
\end{itemize}
\end{defn}
Since $(\epsilon, \delta)$ and $(\varepsilon, \Delta)$ are adjunctions, $\gamma_{1}$, $\Gamma_{1}$ are openings and $\phi_{1}$, $\Phi_{1}$ are closings on the respective lattices. Now we will prove that  $[\gamma,\Gamma]_{1}$ and $[\phi,\Phi]_{1}$ are respectively an opening and a closing on $\mathcal{H}$.
\begin{Proposition}
The following statements are true.
\begin{itemize}
\item[1.] $[\gamma,\Gamma]_{1} = [\delta, \Delta] \circ [\epsilon, \varepsilon]$
\item[2.] $[\phi,\Phi]_{1} = [\epsilon, \varepsilon] \circ [\delta, \Delta]$
\end{itemize}
\end{Proposition}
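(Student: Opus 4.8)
The plan is to prove both identities by unfolding the definitions of the composite operators and exploiting the single structural fact that $[\delta, \bigtriangleup]$ and $[\epsilon, \varepsilon]$ act \emph{componentwise} on the vertex part and the hyperedge part of a subhypergraph. Since the two statements are proven in exactly the same manner, I would carry out part~1 in full and then remark that part~2 is obtained verbatim by exchanging the roles of the dilations and erosions (equivalently, of opening and closing).

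For part~1, I would fix an arbitrary $X = (X^\bullet, X^\times) \in \mathcal{H}$ and evaluate the right-hand side from the inside out. First, applying $[\epsilon, \varepsilon]$ gives, by definition, $[\epsilon, \varepsilon](X) = (\epsilon(X^\bullet), \varepsilon(X^\times))$. The one point that genuinely requires justification is that this pair is again a bona fide subhypergraph of $H$, i.e. an element of $\mathcal{H}$; this is precisely the content of the first theorem, which guarantees that $[\epsilon, \varepsilon]$ maps $\mathcal{H}$ into $\mathcal{H}$, so that the outer operator may legitimately be applied to it and the composition on the right-hand side is well defined.

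Next I would apply $[\delta, \bigtriangleup]$ to this subhypergraph. Since it too acts componentwise, the result is $\bigl(\delta(\epsilon(X^\bullet)),\, \bigtriangleup(\varepsilon(X^\times))\bigr)$. Reading off the definitions, the first component is $(\delta \circ \epsilon)(X^\bullet) = \gamma_1(X^\bullet)$ and the second is $(\bigtriangleup \circ \varepsilon)(X^\times) = \Gamma_1(X^\times)$, so the pair is exactly $(\gamma_1(X^\bullet), \Gamma_1(X^\times)) = [\gamma, \Gamma]_1(X)$. As $X$ was arbitrary, the two operators agree on all of $\mathcal{H}$, which proves part~1. Part~2 is identical: applying $[\delta, \bigtriangleup]$ first and then $[\epsilon, \varepsilon]$ yields $\bigl((\epsilon \circ \delta)(X^\bullet),\, (\varepsilon \circ \bigtriangleup)(X^\times)\bigr) = (\phi_1(X^\bullet), \Phi_1(X^\times)) = [\phi, \Phi]_1(X)$.

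I do not expect any real obstacle here; the entire argument is the observation that a composition of two componentwise operators is again componentwise, with each component the composition of the corresponding component operators. The only step worth stating explicitly is the well-definedness remark above, namely that the intermediate object lands back in $\mathcal{H}$, since without it the right-hand composition would not even be meaningful. Once these two identities are in place the intended payoff is immediate: because $([\epsilon, \varepsilon], [\delta, \bigtriangleup])$ is an adjunction, the general principle that $\beta \circ \alpha$ is an opening and $\alpha \circ \beta$ a closing for any adjunction $(\alpha, \beta)$ shows at once that $[\gamma, \Gamma]_1$ is an opening and $[\phi, \Phi]_1$ a closing on $\mathcal{H}$.
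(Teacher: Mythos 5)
Your proof is correct and follows essentially the same route as the paper's: unfold both sides componentwise and identify $\delta\circ\epsilon$ with $\gamma_1$ and $\bigtriangleup\circ\varepsilon$ with $\Gamma_1$ (and dually for part~2). Your explicit remark that the intermediate object $[\epsilon,\varepsilon](X)$ lies in $\mathcal{H}$, so the composition is well defined, is a small but worthwhile addition that the paper leaves implicit, and your presentation avoids the paper's slightly abusive notation of ``composing'' two pairs of sets.
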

\begin{proof}
Let $X$ be any hypergraph in $\mathcal{H}$. Then 
\begin{eqnarray*}
[\gamma,\Gamma]_{1}(X)& =& (\gamma_{1}(X^\bullet), \Gamma_{1}(X^\times))\\
&=& ((\delta \circ \epsilon)(X^\bullet), (\Delta \circ \varepsilon)(X^\times)\\
&=& (\delta(X^\bullet), \Delta(X^\times)) \circ (\epsilon(X^\bullet), \varepsilon(X^\times))\\
&=& [\delta, \Delta] \circ [\epsilon, \varepsilon](X)
\end{eqnarray*}
This proves 1. A similar line of arguments will prove 2.
\end{proof}
$([\epsilon, \varepsilon], [\delta, \Delta])$ is an adjunction on $\mathcal{H}$ implies that $[\delta, \Delta] \circ [\epsilon, \varepsilon]$ is an opening and $[\epsilon, \varepsilon] \circ [\delta, \Delta]$ is a closing on  $\mathcal{H}$.
\begin{defn}(\textbf{half-opening, half-closing}).
\begin{itemize}
\item[1.]
We define $\gamma_{1/2}$ and $\phi_{1/2}$, that act on $\mathcal{H^\bullet}$, by $\gamma_{1/2}=\delta^\bullet \circ \epsilon^\times$ and $\phi_{1/2}=\epsilon^\bullet \circ \delta^\times$.
\item[2.]
We define $\Gamma_{1/2}$ and $\Phi_{1/2}$, that act on $\mathcal{H^\times}$, by $\Gamma_{1/2}=\delta^\times \circ \epsilon^\bullet$ and $\phi_{1/2}=\epsilon^\times \circ \delta^\bullet$.
\item[3.]
We define $[\gamma,\Gamma]_{1/2}$ and $[\phi,\Phi]_{1/2}$, that act on $\mathcal{H}$ by respectively $[\gamma,\Gamma]_{1/2}(X)=(\gamma_{1/2}(X^\bullet),\Gamma_{1/2}(X^\times))$ and $[\phi,\Phi]_{1/2}(X)=(\phi_{1/2}(X^\bullet),\Phi_{1/2}(X^\times))$ for any $X \in \mathcal{H}$.
\end{itemize}
\end{defn}
\begin{prop}
Let $X^\bullet \subseteq H^\bullet$ and $X^\times \subseteq H^\times$. The following properties are true.
\begin{enumerate}
\item[1.] $\gamma_{1/2}(X^\bullet)=\underset{i \in I, v(e_{i}) \subseteq X^\bullet} \cup v(e_{i})$
\item[2.] $\phi_{1/2}(X^\bullet)= \lbrace x \in H^\bullet| \forall e_{i}, i \in I~ \text{with} ~ x \in v(e_{i})~ \text{and}~ v(e_{i}) \cap X^\bullet \neq \phi \rbrace$.
\item[3.] $\Gamma_{1/2}(X^\times)= \lbrace e_{i}, i \in I | \exists~ x \in v(e_{i}) ~\text{with}~  \lbrace e_{i} \in H^\times ~\text{with}~ x \in v(e_{i}) \rbrace \subseteq X^\times \rbrace$.
\item[4.] $\phi_{1/2}(X^\times)= \lbrace e_{i}, i \in I | v(e_{i}) \subseteq \underset{j \in J}\cup v(e_{j}) \rbrace$.
\end{enumerate}
\end{prop}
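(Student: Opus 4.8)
The plan is to note that each of the four operators is, by its defining composition, built from two of the four elementary maps $\delta^\bullet,\delta^\times,\epsilon^\bullet,\epsilon^\times$, and that Property~\ref{Property1} already records an explicit local description of each of these maps. Thus for every item I would unfold the composition, apply the inner operator and read off its value from Property~\ref{Property1}, then feed that value into the outer operator and invoke Property~\ref{Property1} once more; each identity then collapses to an elementary set manipulation. I expect items 1 and 4 to be immediate and items 2 and 3, the two compositions involving $\epsilon^\bullet$, to carry the only real work, because $\epsilon^\bullet$ is phrased through the complementary edge set as an intersection of complements.

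For item~1, since $\gamma_{1/2}=\delta^\bullet\circ\epsilon^\times$, I would first use Property~\ref{Property1} to write $\epsilon^\times(X^\bullet)=\lbrace e_i,\, i\in I\mid v(e_i)\subseteq X^\bullet\rbrace$, and then apply $\delta^\bullet$, which returns the union of the vertex sets of the edges it receives; this gives $\underset{i\in I,\,v(e_i)\subseteq X^\bullet}\cup v(e_i)$, as claimed. Item~4 is dual in flavour: the operator $\epsilon^\times\circ\delta^\bullet$ first produces $\delta^\bullet(X^\times)=\underset{j\in J}\cup v(e_j)$, after which $\epsilon^\times$ selects exactly the edges whose vertex set lies in that union, yielding $\lbrace e_i,\, i\in I\mid v(e_i)\subseteq\underset{j\in J}\cup v(e_j)\rbrace$. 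No further simplification is needed for these two.

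The substance lies in items~2 and~3. For item~2, $\phi_{1/2}=\epsilon^\bullet\circ\delta^\times$, I would first obtain $\delta^\times(X^\bullet)=\lbrace e_i\mid v(e_i)\cap X^\bullet\neq\phi\rbrace$, and then apply $\epsilon^\bullet$. The subtlety is that $\epsilon^\bullet$ acts through the \emph{complementary} edge set, so I must identify the complement of $\delta^\times(X^\bullet)$ as $\lbrace e_i\mid v(e_i)\cap X^\bullet=\phi\rbrace$, whereupon Property~\ref{Property1} gives $\epsilon^\bullet(\delta^\times(X^\bullet))=\underset{i:\,v(e_i)\cap X^\bullet=\phi}\cap\overline{v(e_i)}$. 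The quantified form then follows by contraposition: a vertex $x$ lies in this intersection exactly when it avoids every edge disjoint from $X^\bullet$, i.e. when every edge $e_i$ containing $x$ satisfies $v(e_i)\cap X^\bullet\neq\phi$. Item~3, $\Gamma_{1/2}=\delta^\times\circ\epsilon^\bullet$, is treated the same way but in the opposite order: I would read $\epsilon^\bullet(X^\times)=\underset{j\notin J}\cap\overline{v(e_j)}$ and, via De Morgan and a contrapositive, reinterpret this vertex set as $\lbrace x\in H^\bullet\mid \lbrace e\in H^\times\mid x\in v(e)\rbrace\subseteq X^\times\rbrace$, namely the vertices meeting no edge outside $X^\times$; applying $\delta^\times$ then returns the edges containing at least one such vertex, which is precisely the asserted description.

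The only genuine obstacle is this reformulation step in items~2 and~3, where the intersection of complements produced by $\epsilon^\bullet$ must be converted into the existential/universal language of the statement. It is a routine De Morgan plus contraposition, but it is exactly where a quantifier or containment slip would most easily creep in, so I would carry it out explicitly rather than leaning on the symmetry between the two items.
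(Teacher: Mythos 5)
Your proposal is correct and follows the same route as the paper: unfold each composition, apply the local characterizations of $\delta^\bullet,\delta^\times,\epsilon^\bullet,\epsilon^\times$ from Property~1 twice, and finish with a De Morgan/contraposition step for the two items involving $\epsilon^\bullet$. If anything, you are slightly more explicit than the paper in items~2 and~3, where the paper stops at the intersection-of-complements form (resp.\ a verbal description) rather than carrying out the quantifier conversion in full.
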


\begin{proof}
\begin{enumerate}
\item[1.]$\gamma_{1/2}(X^\bullet) = \delta^\bullet \circ \epsilon^\times(X^\bullet)$
\begin{eqnarray*}
&=& \delta^\bullet \circ \lbrace e_{i}, i \in I|v(e_{i}) \subseteq X^\bullet \rbrace \text{ (By property of $\epsilon^\times)$}\\
&=& \underset{i \in I, v(e_{i}) \subseteq X^\bullet} \cup v(e_{i}) \text{ (By property of $\delta^\bullet)$}
\end{eqnarray*}
\item[2.] $\phi_{1/2}(X^\bullet)= \epsilon^\bullet \circ \delta^\times(X^\bullet)$
\begin{eqnarray*}
&=& \epsilon^\bullet \circ \lbrace e_{i}, i \in I| v(e_{i}) \cap X^\bullet \neq \phi \rbrace\\
&=& \epsilon^\bullet \circ \lbrace e_{k}, k \in K \rbrace ;\text{ where} K \subseteq I \text{ is some index set and $e_{k}$ is such that $v(e_{k}) \cap X^\bullet \neq \phi$}\\
& &\text{(By property of $\delta^\times(X^\bullet)$)}\\
&=& \underset{k \notin K} \cap \overline{v(e_{k})}
\end{eqnarray*}
\item[3.] $\Gamma_{1/2}(X^\times)= \delta^\times \circ \epsilon^\bullet(X^\times)$
\begin{eqnarray*}
&=& \lbrace e_{i}, i \in I | v(e_{i}) \cap \epsilon^\bullet(X^\times) \neq \phi \rbrace \text{ (By property of $\delta^\times$)}\\
&=& \text{ set of all edges in $H^\times$ which do not belong to any edge of $\overline{X^\times}$}\\
&=& \lbrace e_{i}, i \in I | \exists x \in v(e_{i}) \text{ with } \lbrace e_{i} \in H^\times \text{ with } x \in v(e_{i}) \rbrace \subseteq X^\times \rbrace
\end{eqnarray*}
\item[4.] $\phi_{1/2}(X^\times)= \epsilon^\times \circ \delta^\bullet(X^\times)$
\begin{eqnarray*}
&=& \epsilon^\times \circ \underset{j \in J} \cup v(e_{j})\\
&=& \lbrace e_{i}, i \in I | v(e_{i}) \subseteq  \underset{j \in J} \cup v(e_{j}) \rbrace
\end{eqnarray*}
\end{enumerate}
\end{proof}

\begin{center}
\begin{figure}[htp!]
\begin{minipage}[b]{0.15\linewidth}
\centering
\centering
\includegraphics[scale=.5]{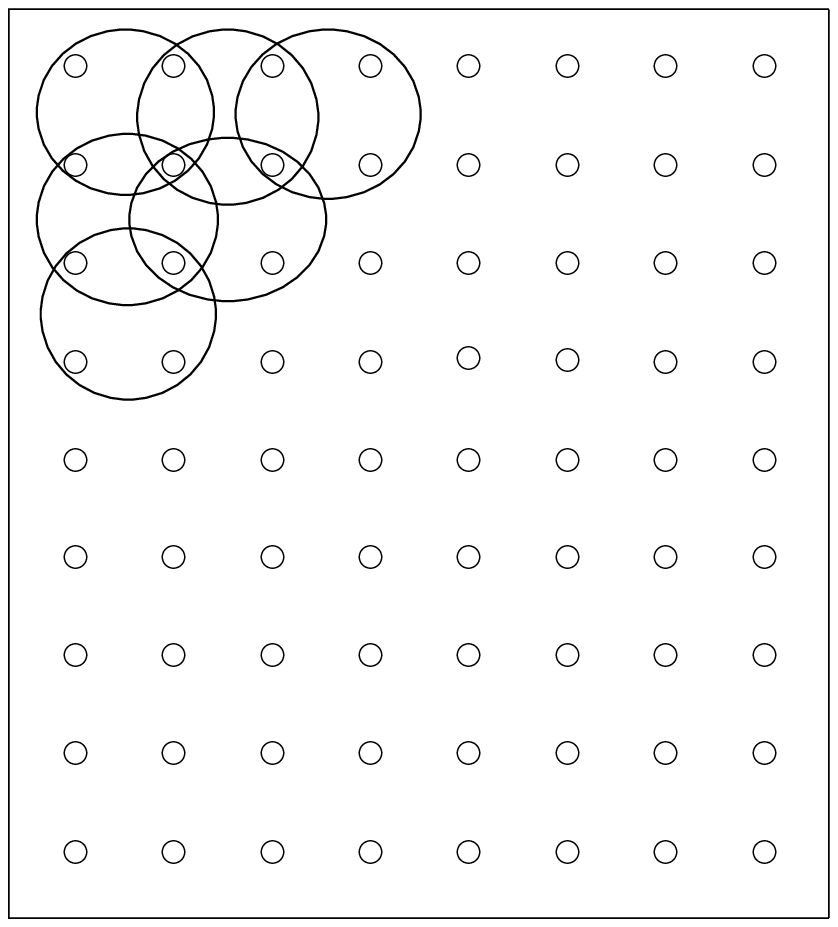}
\caption*{(m) $\gamma 1$}
\end{minipage}
\hspace{2cm}
\begin{minipage}[b]{0.15\linewidth}
\centering
\includegraphics[scale=.5]{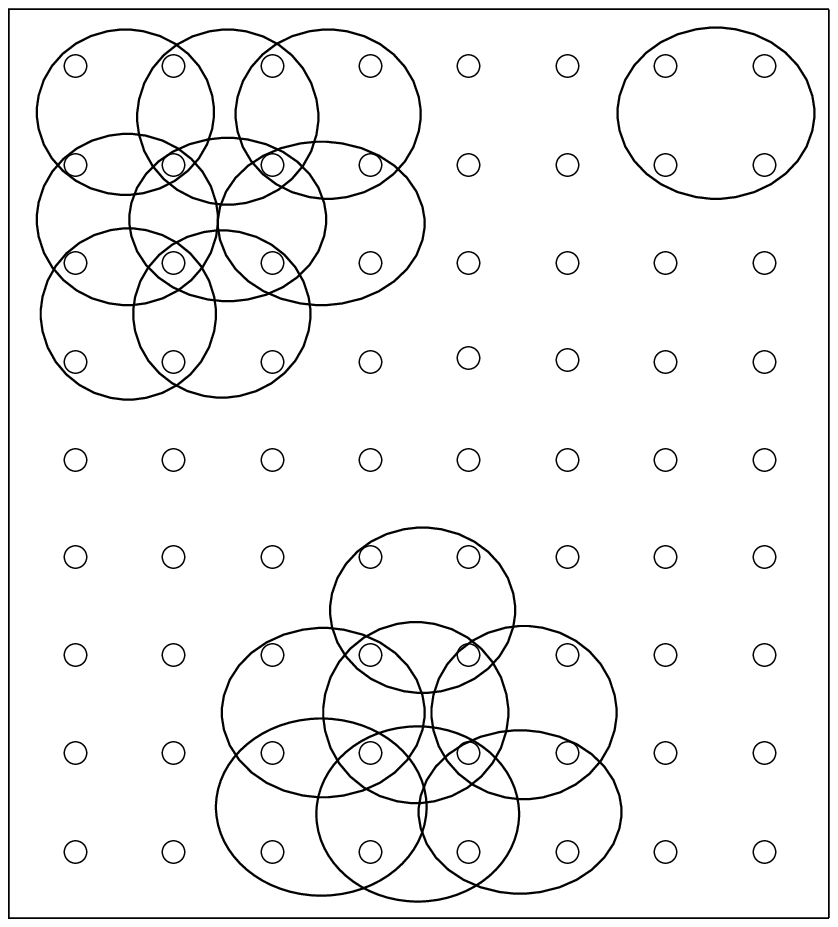}
\caption*{(n) $\phi 1 $}
\end{minipage}
\hspace{2cm}
\begin{minipage}[b]{0.15\linewidth}
\centering
\includegraphics[scale=.5]{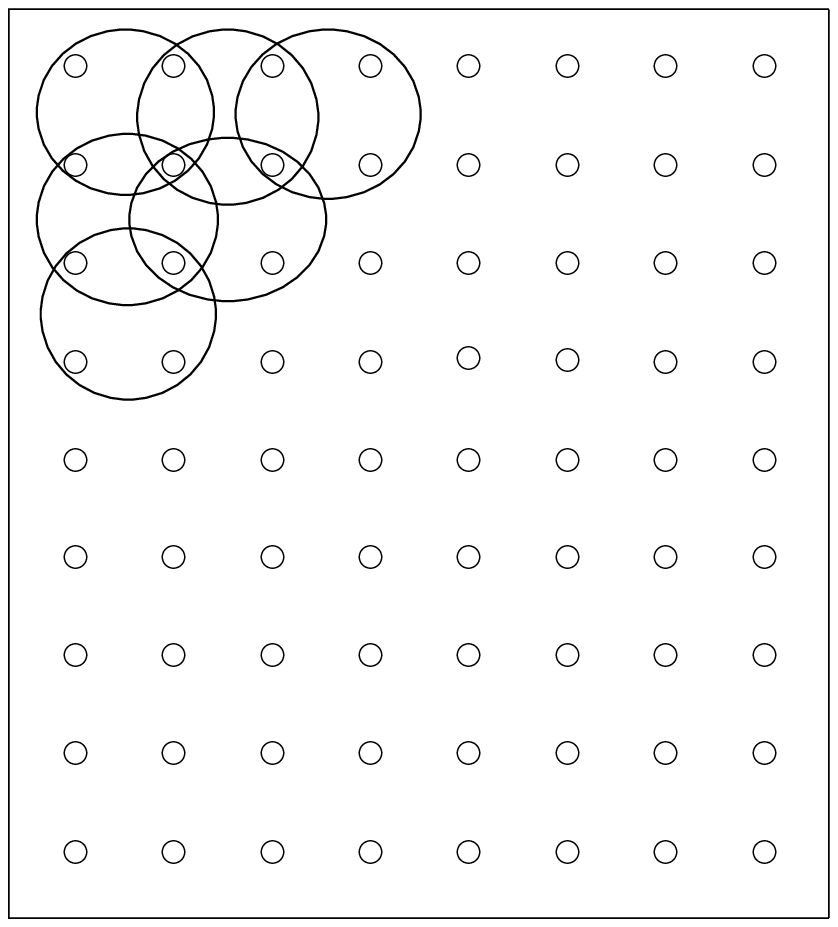}
\caption*{(o) $\Gamma 1$}
\end{minipage}
\hspace{2cm}
\begin{minipage}[b]{0.15\linewidth}
\centering
\includegraphics[scale=.5]{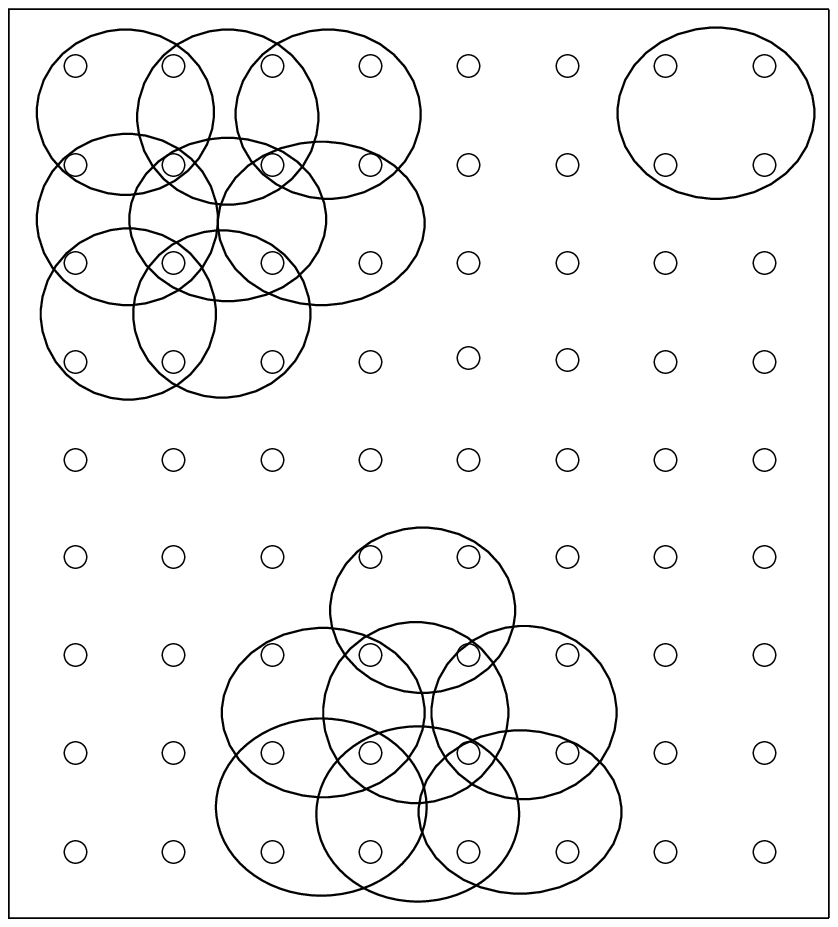}
\caption*{(p) $\varphi 1$}
\end{minipage}
\hspace{2cm}
\begin{minipage}[b]{0.15\linewidth}
\centering
\includegraphics[scale=.5]{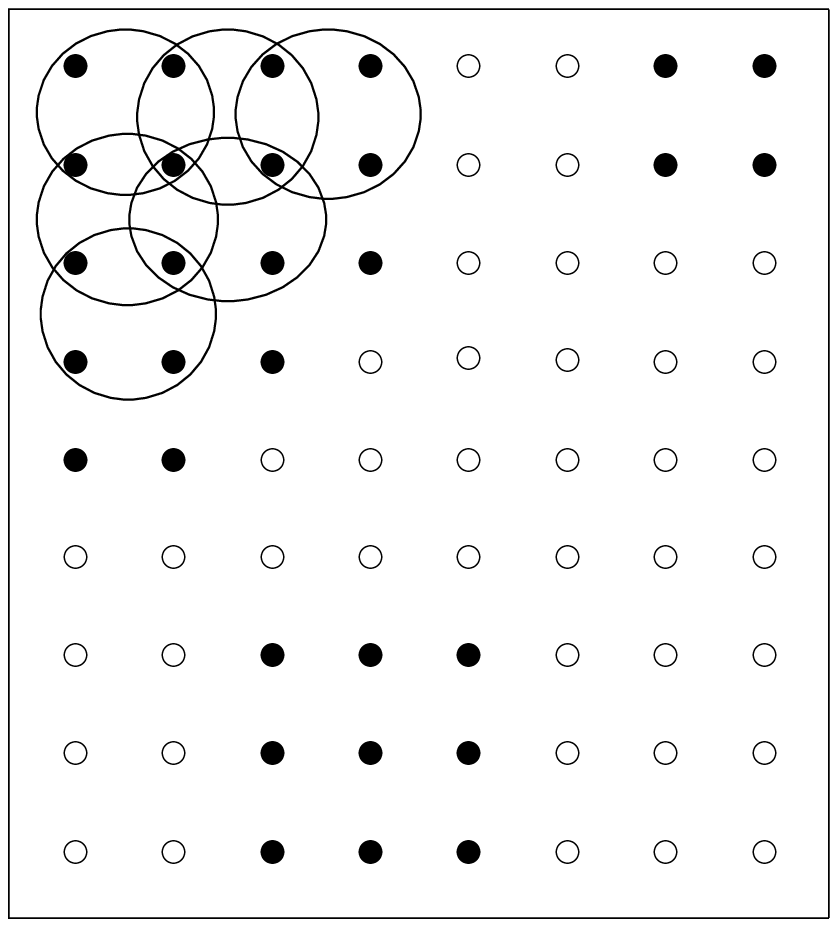}
\caption*{(q) $[\gamma, \Gamma]1$}
\end{minipage}
\hspace{2cm}
\begin{minipage}[b]{0.15\linewidth}
\centering
\includegraphics[scale=.5]{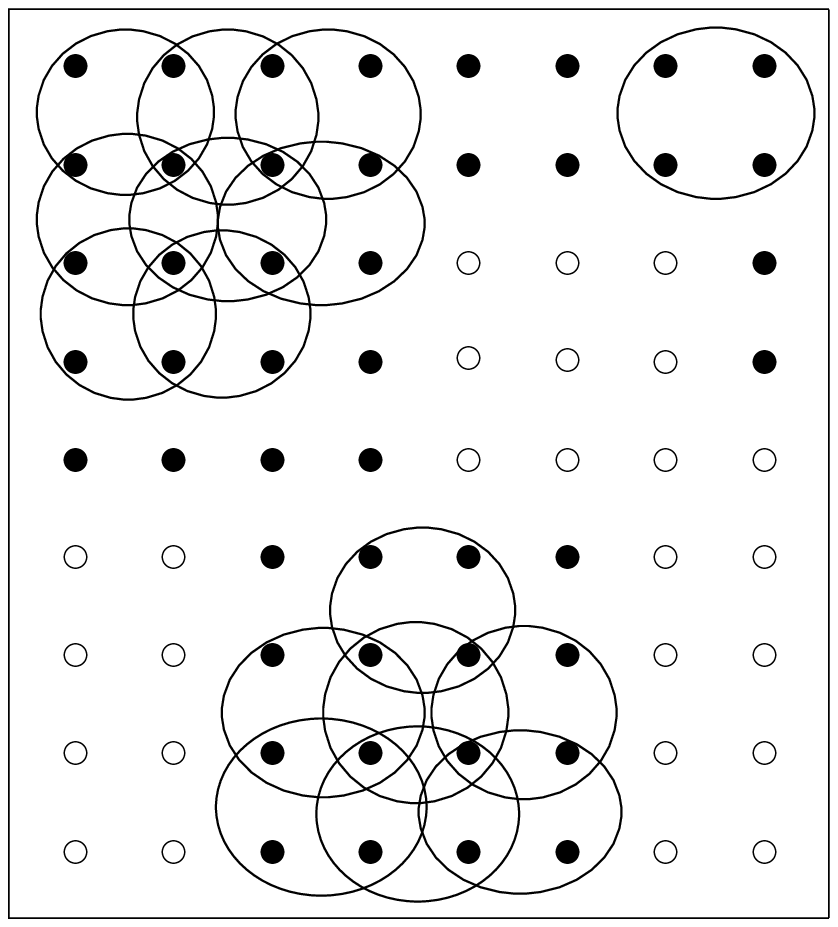}
\caption*{(r) $[\phi, \varphi]1$}
\end{minipage}
\hspace{2cm}
\begin{minipage}[b]{0.15\linewidth}
\centering
\includegraphics[scale=.5]{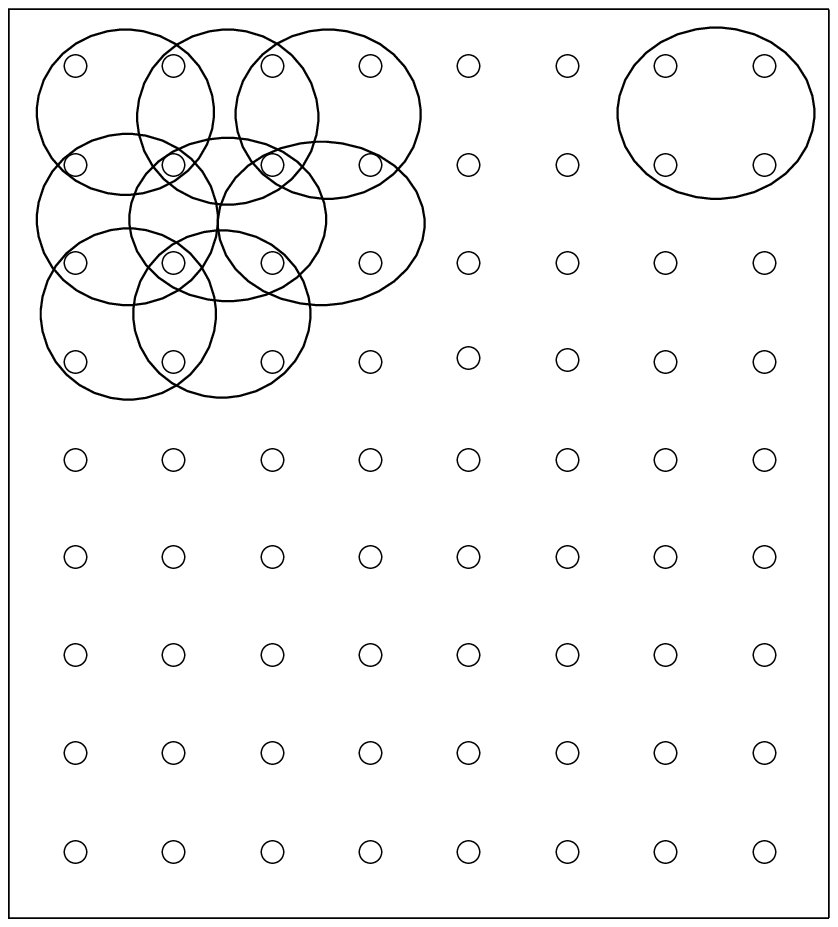}
\caption*{(s) $\gamma 1/2$}
\end{minipage}
\hspace{2cm}
\begin{minipage}[b]{0.15\linewidth}
\centering
\includegraphics[scale=.5]{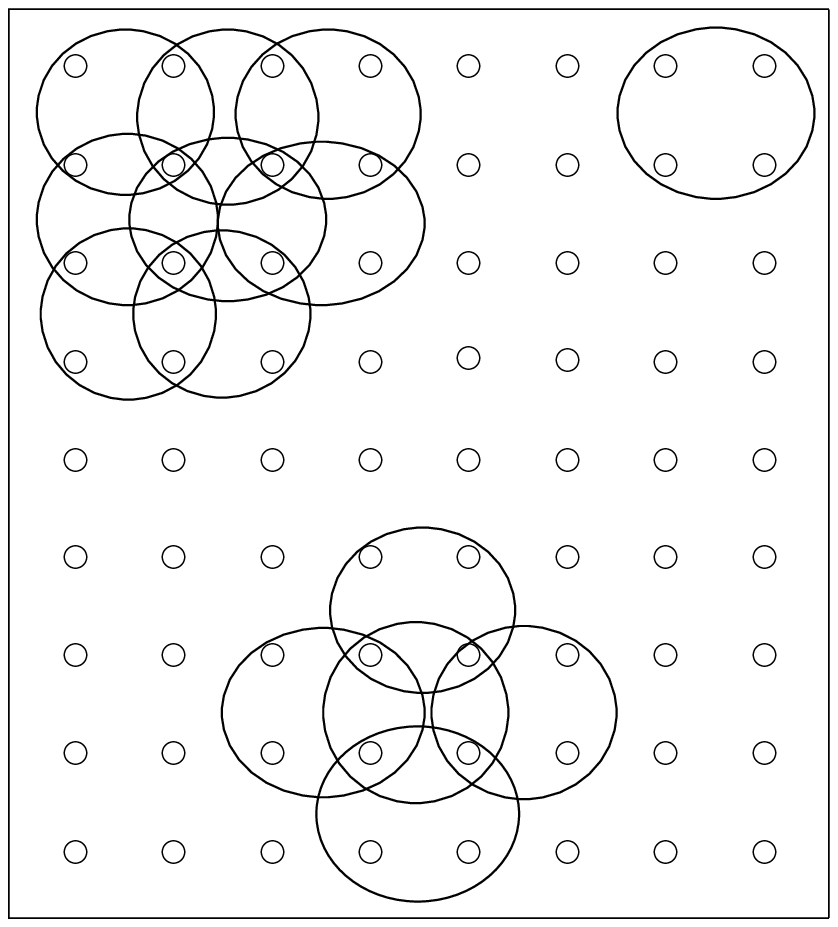}
\caption*{(t) $\phi 1/2$}
\end{minipage}
\hspace{2cm}
\begin{minipage}[b]{0.15\linewidth}
\centering
\includegraphics[scale=.5]{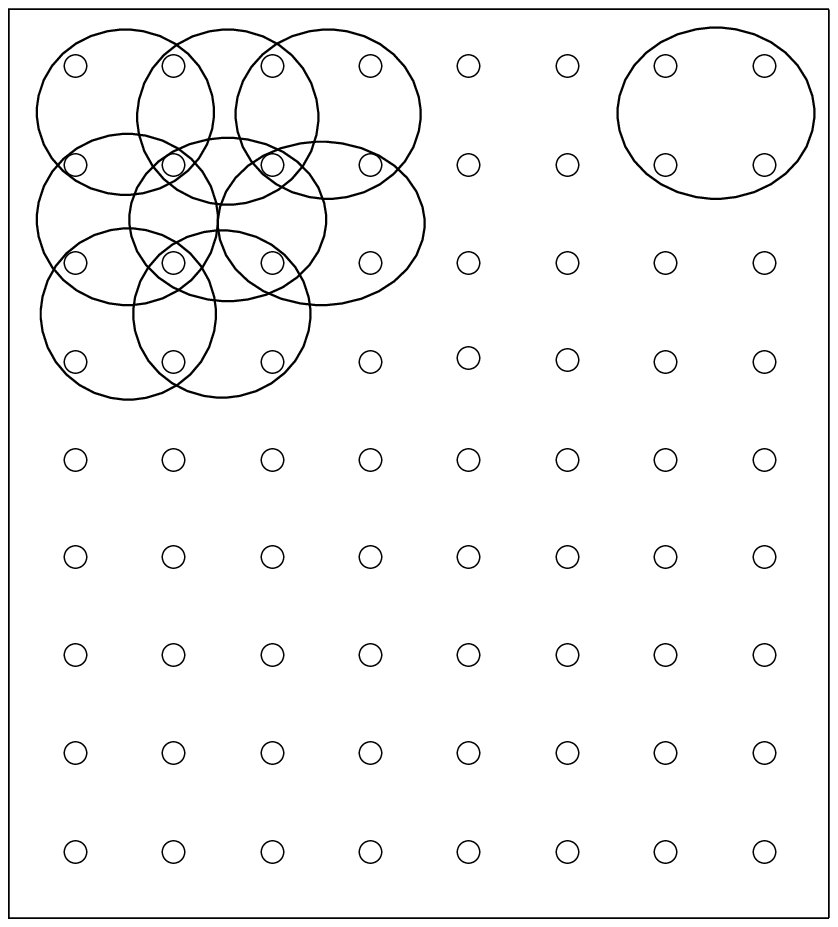}
\caption*{(u) $\Gamma 1/2$}
\end{minipage}
\hspace{2cm}
\begin{minipage}[b]{0.15\linewidth}
\centering
\includegraphics[scale=.5]{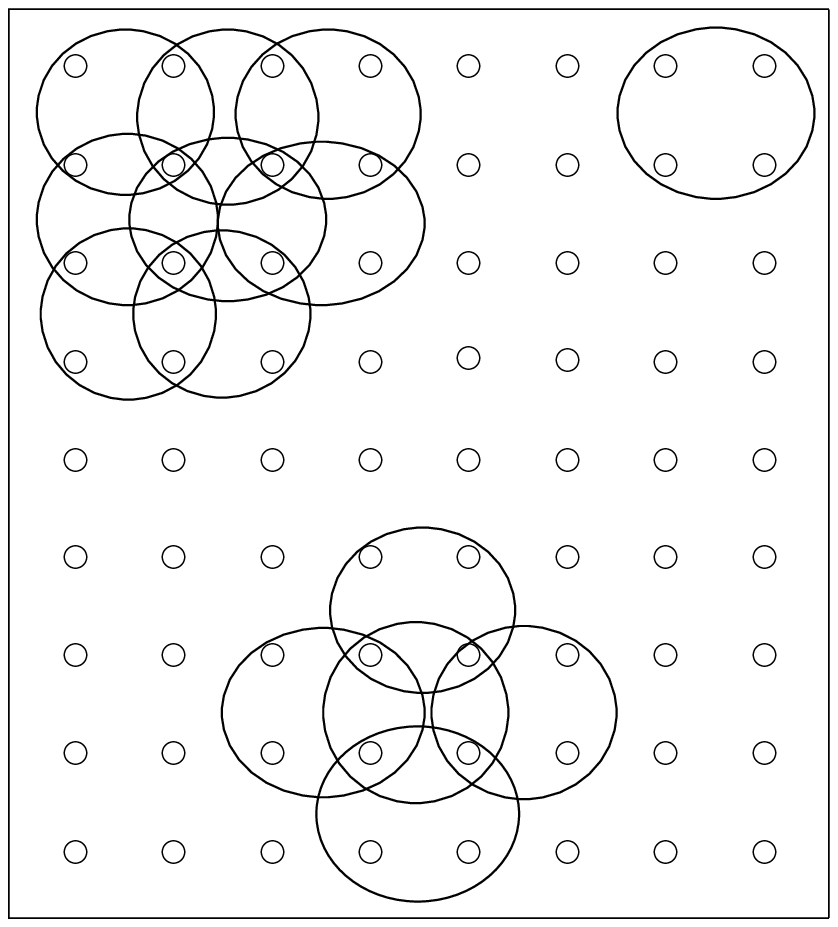}
\caption*{(v) $\varphi 1/2$}
\end{minipage}
\hspace{2cm}
\begin{minipage}[b]{0.15\linewidth}
\centering
\includegraphics[scale=.5]{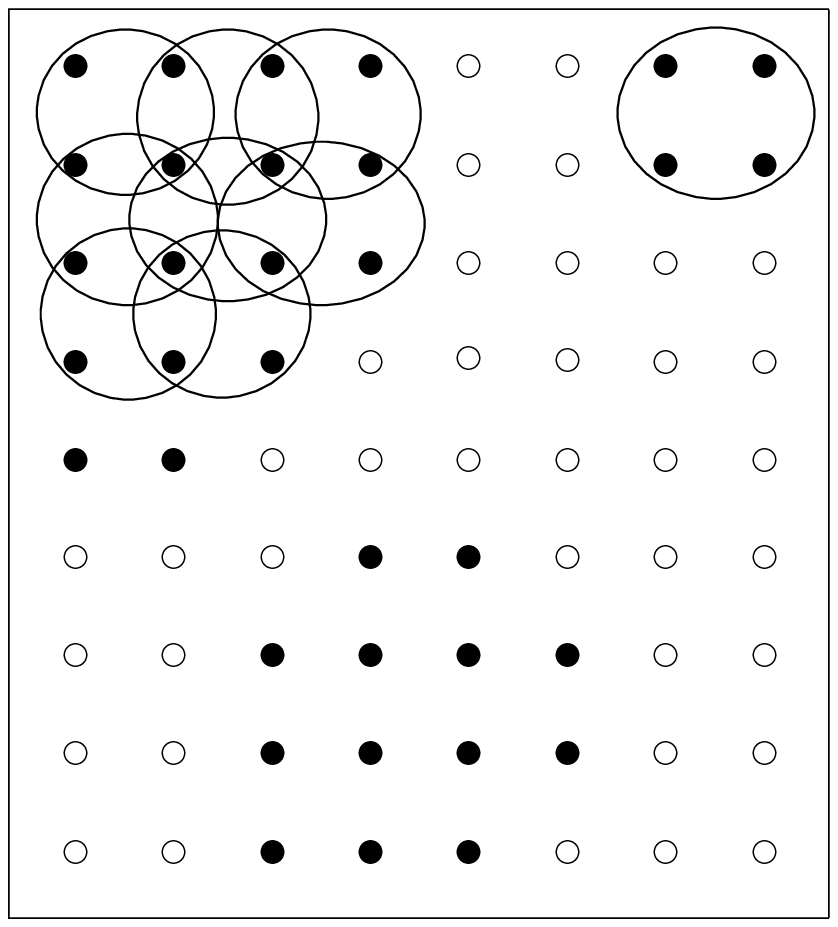}
\caption*{(w) $[\gamma, \Gamma]1/2$}
\end{minipage}
\hspace{2cm}
\begin{minipage}[b]{0.15\linewidth}
\centering
\includegraphics[scale=.5]{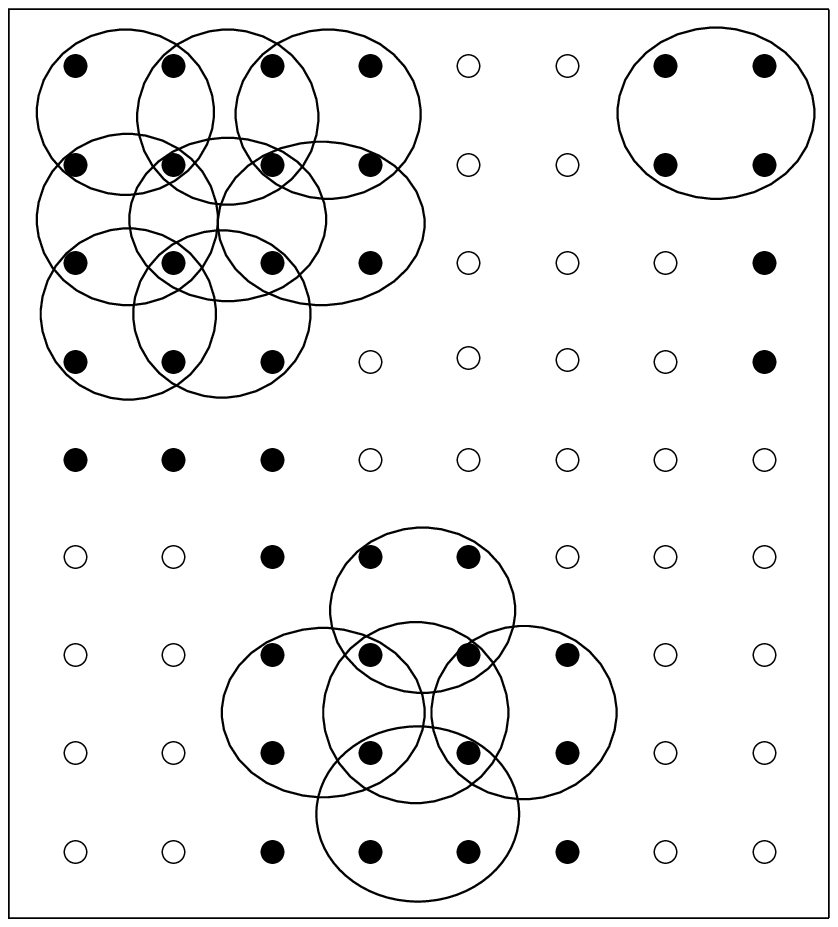}
\caption*{(x) $[\phi, \varphi]1/2$}
\end{minipage}

\caption{Illustration of openings and closings}
\end{figure}
\end{center}
\begin{rem}
The following statements are true about $\gamma_{1/2}$.
\begin{enumerate}
\item[1.] $\gamma_{1/2}(X^\bullet)=\lbrace x \in X^\bullet| \exists e_{i}, i \in I~ \text{with} ~ x \in v(e_{i})~ \text{and}~ v(e_{i}) \subseteq X^\bullet \rbrace$.
\item[2.] $\gamma_{1/2}(X^\bullet)=X^\bullet \setminus \lbrace x \in X^\bullet| \forall e_{i}, i \in I~ \text{with} ~ x \in v(e_{i})~ \text{and}~ v(e_{i}) \nsubseteq X^\bullet \rbrace$.
\end{enumerate}
\end{rem}

Now we will enumerate some properties of these operators on the respective lattices based on the partial order relations defined on them.
\begin{prop} \label{Property6}
Let $X^\bullet \subseteq H^\bullet$ and $X^\times \subseteq H^\times$. The following properties hold true.
\begin{enumerate}
\item[1.] $\gamma_{1}(X^\bullet) \subseteq \gamma_{1/2}(X^\bullet) \subseteq X^\bullet \subseteq \phi_{1/2}(X^\bullet) \subseteq \phi_{1}(X^\bullet)$.
\item[2.] $\Gamma_{1}(X^\times) \subseteq \Gamma_{1/2}(X^\times) \subseteq X^\times \subseteq \Phi_{1/2}(X^\times) \subseteq \Phi_{1}(X^\times)$.
\item[3.] $[\gamma, \Gamma]_{1}(X) \subseteq [\gamma, \Gamma]_{1/2}(X) \subseteq X \subseteq [\phi, \Phi]_{1/2}(X) \subseteq [\phi, \Phi]_{1}(X)$.
\end{enumerate}
\end{prop}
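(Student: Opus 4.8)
The plan is to split each of the three chains into a \emph{middle} pair of inclusions that sandwich the identity and an \emph{outer} pair that compares the level-$1$ operators against the level-$1/2$ operators. The four middle inclusions come for free from the adjunction machinery: by Property~\ref{Property2} both $(\epsilon^\times,\delta^\bullet)$ and $(\epsilon^\bullet,\delta^\times)$ are adjunctions, and, as recalled at the start of Section~5, the dilation-after-erosion composite of an adjunction is an opening (hence anti-extensive) while the erosion-after-dilation composite is a closing (hence extensive). Reading off the definitions, $\gamma_{1/2}=\delta^\bullet\circ\epsilon^\times$ and $\Gamma_{1/2}=\delta^\times\circ\epsilon^\bullet$ are exactly such openings, and $\phi_{1/2}=\epsilon^\bullet\circ\delta^\times$ and $\Phi_{1/2}=\epsilon^\times\circ\delta^\bullet$ are exactly such closings. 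This gives at once $\gamma_{1/2}(X^\bullet)\subseteq X^\bullet\subseteq\phi_{1/2}(X^\bullet)$ and $\Gamma_{1/2}(X^\times)\subseteq X^\times\subseteq\Phi_{1/2}(X^\times)$.

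For the outer inclusions the key observation is that each level-$1$ operator factors through a level-$1/2$ operator of the \emph{other} type. Using $\delta=\delta^\bullet\circ\delta^\times$, $\epsilon=\epsilon^\bullet\circ\epsilon^\times$, $\Delta=\delta^\times\circ\delta^\bullet$, $\varepsilon=\epsilon^\times\circ\epsilon^\bullet$ and associativity of composition, I would write
\[
\gamma_1=\delta^\bullet\circ\Gamma_{1/2}\circ\epsilon^\times,\qquad
\phi_1=\epsilon^\bullet\circ\Phi_{1/2}\circ\delta^\times,\qquad
\Gamma_1=\delta^\times\circ\gamma_{1/2}\circ\epsilon^\bullet,\qquad
\Phi_1=\epsilon^\times\circ\phi_{1/2}\circ\delta^\bullet.
\]
Then part~1 follows thus: applying the anti-extensivity of $\Gamma_{1/2}$ to the operand $\epsilon^\times(X^\bullet)$ gives $\Gamma_{1/2}(\epsilon^\times(X^\bullet))\subseteq\epsilon^\times(X^\bullet)$, and since $\delta^\bullet$ is a dilation it is increasing, so applying it preserves the inclusion and yields $\gamma_1(X^\bullet)\subseteq\gamma_{1/2}(X^\bullet)$. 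Dually, extensivity of $\Phi_{1/2}$ followed by monotonicity of the erosion $\epsilon^\bullet$ gives $\phi_{1/2}(X^\bullet)\subseteq\phi_1(X^\bullet)$. Part~2 is the mirror image under $\bullet\leftrightarrow\times$: anti-extensivity of $\gamma_{1/2}$ with monotonicity of $\delta^\times$ gives $\Gamma_1(X^\times)\subseteq\Gamma_{1/2}(X^\times)$, and extensivity of $\phi_{1/2}$ with monotonicity of $\epsilon^\times$ gives $\Phi_{1/2}(X^\times)\subseteq\Phi_1(X^\times)$.

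Part~3 then requires no new work. The order on $\mathcal{H}$ is componentwise, that is $X\subseteq Y$ iff $X^\bullet\subseteq Y^\bullet$ and $X^\times\subseteq Y^\times$, and each bracketed operator is defined coordinatewise; hence the five-term chain in $\mathcal{H}$ is precisely the conjunction, read coordinate by coordinate, of the chains already established in parts~1 and~2.

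The only genuinely non-routine step is spotting the four factorisations; once they are written down the argument collapses to "(anti)extensivity of an adjunction composite, then apply one monotone operator." The point to double-check carefully is legitimacy of each factorisation together with the membership of the intermediate operand in the correct lattice — e.g. that $\epsilon^\times(X^\bullet)\in\mathcal{H}^\times$, so that the anti-extensivity of $\Gamma_{1/2}$ on $\mathcal{H}^\times$ is applicable — but no real difficulty is expected there, since each intermediate operator maps into the lattice on which the next half-operator acts.
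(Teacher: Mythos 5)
Your proof is correct. The middle inclusions via anti-extensivity/extensivity of the adjunction composites, and the legitimacy of the four factorisations $\gamma_1=\delta^\bullet\circ\Gamma_{1/2}\circ\epsilon^\times$, $\phi_1=\epsilon^\bullet\circ\Phi_{1/2}\circ\delta^\times$, $\Gamma_1=\delta^\times\circ\gamma_{1/2}\circ\epsilon^\bullet$, $\Phi_1=\epsilon^\times\circ\phi_{1/2}\circ\delta^\bullet$, all check out against the definitions and the adjunctions $(\epsilon^\times,\delta^\bullet)$ and $(\epsilon^\bullet,\delta^\times)$; the typing of each intermediate operand is also as you claim. Your route is, however, more uniform than the paper's. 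The paper proves the outer inclusion $\phi_{1/2}(X^\bullet)\subseteq\phi_1(X^\bullet)$ by exactly your factorisation trick (inserting the identity, bounding it by the closing $\epsilon^\times\circ\delta^\bullet$, and applying the increasing $\epsilon^\bullet$), but it proves $\gamma_1(X^\bullet)\subseteq\gamma_{1/2}(X^\bullet)$ by a different, element-wise argument: it expands $\delta\circ\epsilon$ using the local characterisations of Proposition 3 and observes that $v(e_i)\cap\epsilon(X^\bullet)\neq\phi$ forces $v(e_i)\subseteq X^\bullet$. Your purely lattice-theoretic treatment handles all four outer inclusions by the single mechanism "(anti)extensivity of the inner composite, then monotonicity of the outer half-operator," which is shorter, avoids any appeal to the explicit set formulas, and would transfer verbatim to any pair of adjunctions between two complete lattices; the paper's element-wise step, by contrast, gives a concrete picture of why the inclusion holds in the hypergraph setting. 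Parts 2 and 3 are left as "similar" in the paper, and your coordinatewise reduction of part 3 is the intended argument.
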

\begin{proof}
\begin{enumerate}
\item[1.] 
\begin{eqnarray*}
\gamma_{1}(X^\bullet) &=& \delta \circ \epsilon (X^\bullet)\\
&=& \lbrace x \in H^\bullet | \exists e_{i}, i \in I \text{ such that } x \in v(e_{i}) \text{ and } v(e_{i}) \cap \epsilon(X^\bullet) \neq \phi \rbrace \text{~(By property of $\delta$ )}\\
& \subseteq & \lbrace x \in H^\bullet | \exists e_{i}, i \in I \text{ such that } x \in v(e_{i}) \text{ and } v(e_{i}) \subseteq X^\bullet \rbrace\\
&=& \underset{v(e_{i}) \subseteq X^\bullet} \cup v(e_{i})\\
&=& \gamma_{1/2}(X^\bullet).\\
\end{eqnarray*}
Now $ \gamma_{1/2}(X^\bullet) = \underset{v(e_{i}) \subseteq X^\bullet} \cup v(e_{i}) \subseteq X^\bullet$. Also $ X^\bullet \subseteq \phi_{1/2}(X^\bullet)$, because $ \phi_{1/2} $ is a closing on $H^\bullet$.
\begin{eqnarray*}
\phi_{1/2}(X^\bullet) &=& \epsilon^\bullet \circ \delta^\times(X^\bullet)\\
&=& \epsilon^\bullet \circ I \circ \delta^\times(X^\bullet), \text{~(where $I$ is the identity on $H^\times$)}\\
& \subseteq & \epsilon^\bullet \circ (\epsilon^\times \circ \delta^\bullet) \circ \delta^\times(X^\bullet) \text{ ($ \because  I(X^\times) = X^\times \subseteq \epsilon^\times \circ \delta^\bullet(X^\times)$ and $\epsilon^\times \circ \delta^\bullet$ is a closing)}\\
& = & (\epsilon^\bullet \circ \epsilon^\times) \circ (\delta^\bullet \circ \delta^\times)(X^\bullet)\\
&=& \epsilon \circ \delta(X^\bullet)\\
&=& \phi_{1}(X^\bullet).
\end{eqnarray*}
This proves 1.
\end{enumerate}
Properties 2 and 3 can be proved in a similar manner.
\end{proof}

\begin{prop}{\textbf{[hypergraph opening, hypergraph closing]}}
\begin{enumerate}
\item[1.] The operators $\gamma_{1/2}$ and $\gamma_{1}$ (resp. $\Gamma_{1/2}$ and $\Gamma_{1}$) are openings on $\mathcal{H^\bullet}$ (resp. $\mathcal{H^\times}$) and $\phi_{1/2}$ and $\phi_{1}$ (resp. $\Phi_{1/2}$ and $\Phi_{1}$) are closings on $\mathcal{H^\bullet}$ (resp. $\mathcal{H^\times}$).
\item[2.] The family $\mathcal{H}$ is closed under $[\gamma, \Gamma]_{1/2}$, $[\phi, \Phi]_{1/2}$, $[\gamma, \Gamma]_{1}$ and $[\phi, \Phi]_{1}$.
\item[3.] $[\gamma, \Gamma]_{1/2}$ and $[\gamma, \Gamma]_{1}$ are openings on $\mathcal{H}$ and $[\phi, \Phi]_{1/2}$ and $[\phi, \Phi]_{1}$ are closings on $\mathcal{H}$.
\end{enumerate}
\end{prop}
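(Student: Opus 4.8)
The plan is to take the three parts in order, reducing each to the adjunction machinery of Section~4 together with Property~\ref{Property6} and the two closure Theorems, so that essentially no fresh computation is needed except in one place.

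For part~1 I would read off every opening and closing from the filter principle recalled at the start of this section: if $(\alpha,\beta)$ is an adjunction then $\beta\circ\alpha$ is an opening and $\alpha\circ\beta$ is a closing. Applied to $(\epsilon,\delta)$ this makes $\gamma_1=\delta\circ\epsilon$ an opening and $\phi_1=\epsilon\circ\delta$ a closing on $\mathcal{H}^\bullet$; applied to $(\varepsilon,\bigtriangleup)$ it makes $\Gamma_1$ an opening and $\Phi_1$ a closing on $\mathcal{H}^\times$. For the half operators I would use the two single-step adjunctions of Property~\ref{Property2}: from the adjunction $(\epsilon^\times,\delta^\bullet)$ we get that $\gamma_{1/2}=\delta^\bullet\circ\epsilon^\times$ is an opening and $\Phi_{1/2}=\epsilon^\times\circ\delta^\bullet$ a closing, and from $(\epsilon^\bullet,\delta^\times)$ that $\Gamma_{1/2}=\delta^\times\circ\epsilon^\bullet$ is an opening and $\phi_{1/2}=\epsilon^\bullet\circ\delta^\times$ a closing.

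For part~2 the subscript-$1$ cases are immediate: by the preceding Proposition $[\gamma,\Gamma]_1=[\delta,\bigtriangleup]\circ[\epsilon,\varepsilon]$ and $[\phi,\Phi]_1=[\epsilon,\varepsilon]\circ[\delta,\bigtriangleup]$, and since both $[\delta,\bigtriangleup]$ and $[\epsilon,\varepsilon]$ map $\mathcal{H}$ into $\mathcal{H}$ (the closure Theorems), so do their composites. The real work, and the step I expect to be the main obstacle, is closure under the two half filters; here I would argue pointwise on hyperedges as in the proof that $[\delta,\bigtriangleup](X)\in\mathcal{H}$, fixing $X\in\mathcal{H}$ so that $v(e_j)\subseteq X^\bullet$ for every $j\in J$ and checking that every hyperedge surviving the edge filter has all its vertices in the filtered vertex set. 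Concretely, for $[\gamma,\Gamma]_{1/2}$ take $e\in\Gamma_{1/2}(X^\times)$; by the characterization of $\Gamma_{1/2}$ there is a vertex $x\in v(e)$ such that every hyperedge containing $x$ lies in $X^\times$, and as $e$ is itself such a hyperedge we get $e\in X^\times$, hence $v(e)\subseteq X^\bullet$ and so $v(e)\subseteq\bigcup_{v(e_i)\subseteq X^\bullet}v(e_i)=\gamma_{1/2}(X^\bullet)$. For $[\phi,\Phi]_{1/2}$ take $e\in\Phi_{1/2}(X^\times)$, so that $v(e)\subseteq\bigcup_{j\in J}v(e_j)\subseteq X^\bullet$; then each $x\in v(e)$ lies in $X^\bullet$, so every hyperedge through $x$ meets $X^\bullet$, whence $x\in\phi_{1/2}(X^\bullet)$ by the characterization of $\phi_{1/2}$ and therefore $v(e)\subseteq\phi_{1/2}(X^\bullet)$. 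In both cases the output pair is a subhypergraph, which is exactly the asserted closure.

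Finally, part~3 combines the first two. Since the order on $\mathcal{H}$ is componentwise, $[\gamma,\Gamma]_{1/2}$ and $[\phi,\Phi]_{1/2}$ are increasing and idempotent because their two components are (part~1), the intermediate images staying in $\mathcal{H}$ by part~2 so that the operators are genuine self-maps of $\mathcal{H}$; anti-extensivity of $[\gamma,\Gamma]_{1/2}$ and extensivity of $[\phi,\Phi]_{1/2}$ are precisely the middle inclusions of Property~\ref{Property6}(3). Thus $[\gamma,\Gamma]_{1/2}$ is an opening and $[\phi,\Phi]_{1/2}$ a closing on $\mathcal{H}$. For the subscript-$1$ operators I would argue the same way, or more quickly observe that $[\gamma,\Gamma]_1=[\delta,\bigtriangleup]\circ[\epsilon,\varepsilon]$ is $\beta\circ\alpha$ for the adjunction $([\epsilon,\varepsilon],[\delta,\bigtriangleup])$ and that $[\phi,\Phi]_1$ is the corresponding $\alpha\circ\beta$, so that the filter principle delivers the opening and the closing directly.
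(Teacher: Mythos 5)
Your proposal is correct and follows essentially the same route as the paper: part 1 from the four adjunctions $(\epsilon^\bullet,\delta^\times)$, $(\epsilon^\times,\delta^\bullet)$, $(\epsilon,\delta)$, $(\varepsilon,\bigtriangleup)$; part 2 by checking hyperedge-by-hyperedge that the filtered edge set remains subordinate to the filtered vertex set (with the subscript-$1$ cases delegated to the decomposition $[\gamma,\Gamma]_1=[\delta,\bigtriangleup]\circ[\epsilon,\varepsilon]$); and part 3 componentwise. The only differences are cosmetic: in part 2 you invoke the explicit local characterizations of $\Gamma_{1/2}$ and $\phi_{1/2}$ where the paper uses anti-extensivity and extensivity abstractly, and in part 3 you explicitly supply the anti-extensivity of $[\gamma,\Gamma]_{1/2}$ from Property~\ref{Property6}, a step the paper's own proof leaves implicit.
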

\begin{proof}
\begin{enumerate}
\item[1.] If $(\alpha, \beta)$ is an adjunction on a lattice $\mathcal{L}$ then  $\beta \circ \alpha$ is an opening and $\alpha \circ \beta$ is a closing on $\mathcal{L}$. The result follows from the fact that $(\epsilon^\bullet, \delta^\times)$, $(\epsilon^\times, \delta^\bullet)$, $(\epsilon, \delta)$ and $(\varepsilon, \Delta)$ are adjunctions.
\item[2.] We will prove that $[\gamma, \Gamma]_{1/2}(X) \in \mathcal{H}$, whenever $X \in \mathcal{H}$. Since $[\gamma, \Gamma]_{1/2}(X)= (\gamma_{1/2}(X^\bullet), \Gamma_{1/2}(X^\times))$, it is enough to show that if $e \in \Gamma_{1/2}(X^\times)$, then $v(e) \subseteq \gamma_{1/2}(X^\bullet)$.
\begin{eqnarray*}
e \in \Gamma_{1/2}(X^\times) & \Rightarrow & e \in \delta^\times \circ \epsilon^\bullet(X^\times)\\
& \Rightarrow & e \in X^\times \text{ (since $\delta^\times \circ \epsilon^\bullet$ is an opening $\delta^\times \circ \epsilon^\bullet(X^\times) \subseteq X^\times$)}\\
& \Rightarrow & e \in \epsilon^\times(X^\bullet) \text{ (since $X^\times \subseteq \epsilon^\times(X^\bullet)$)}
\end{eqnarray*}
Since $e_{j} \in X^\times \Rightarrow v(e_{j}) \subseteq \delta^\bullet(X^\times)$, we have $v(e) \subseteq \delta^\bullet \circ \epsilon^\times(X^\bullet)$. Thus $v(e) \subseteq \gamma_{1/2}(X^\bullet)$. Hence if $X$ is a hypergraph, then $[\gamma, \Gamma]_{1/2}(X)$ is a hypergraph and so $\mathcal{H}$ is closed under $[\gamma, \Gamma]_{1/2}$.\\
$\mathcal{H}$ is closed under $[\gamma, \Gamma]_{1}$ and $[\phi, \Phi]_{1}$ due to proposition 1.

Now we will prove that $\mathcal{H}$ is closed under $[\phi, \Phi]_{1/2}$. It is enough to show that if $X \in \mathcal{H}$, then $[\phi, \Phi]_{1/2}(X) \in \mathcal{H}$. We know that $[\phi, \Phi]_{1/2}(X)=(\phi_{1/2}(X^\bullet), \Phi_{1/2}(X^\times))$. If $e \in \Phi_{1/2}(X^\times)$, we need to prove that $v(e) \subseteq \phi_{1/2}(X^\bullet)$.
\begin{eqnarray*}
e \in \Phi_{1/2}(X^\times) & \Rightarrow & e \in \epsilon^\times \circ \delta^\bullet(X^\times)\\
& \Rightarrow & e \in \lbrace e_{i}, i \in I|v(e_{i}) \subseteq \delta^\bullet(X^\times) \rbrace\\
& \Rightarrow & v(e) \subseteq \delta^\bullet(X^\times) \\
\end{eqnarray*}
But $\delta^\bullet(X^\times)=\underset{e \in X^\times} \cup v(e) \subseteq X^\bullet$. Since $\epsilon^\bullet \circ \delta^\times $ is a closing, $X^\bullet \subseteq \epsilon^\bullet \circ \delta^\times(X^\bullet)$. Therefore $\delta^\bullet(X^\times) \subseteq \epsilon^\bullet \circ \delta^\times(X^\bullet)$. Thus 
\begin{eqnarray*}
e \in \Phi_{1/2}(X^\times) & \Rightarrow & v(e) \subseteq \epsilon^\bullet \circ \delta^\times(X^\bullet)\\
& \Rightarrow &v(e) \subseteq \phi_{1/2}(X^\bullet)
\end{eqnarray*}
Thus $[\phi, \Phi]_{1/2}(X) \in \mathcal{H}$. Hence $\mathcal{H}$ is closed under $[\phi, \Phi]_{1/2}$.

\item[3.] In order to prove that $[\gamma, \Gamma]_{1/2}$ is an opening, it is enough to show that $[\gamma, \Gamma]_{1/2}$ is increasing and idempotent on $\mathcal{H}$. Let $X,Y \in \mathcal{H}$ be such that $X \subseteq Y \subseteq H$. Thus $X^\bullet \subseteq Y^\bullet \subseteq H^\bullet$ and $X^\times \subseteq Y^\times \subseteq H^\times$. We have $[\gamma, \Gamma]_{1/2}(X) = (\gamma_{1/2}(X^\bullet), \Gamma_{1/2}(X^\times))$. Therefore 
\begin{eqnarray*}
[\gamma, \Gamma]_{1/2} \circ [\gamma, \Gamma]_{1/2}(X) &=& [\gamma, \Gamma]_{1/2} \circ (\gamma_{1/2}(X^\bullet), \Gamma_{1/2}(X^\times))\\
&=& (\gamma_{1/2} \circ \gamma_{1/2}(X^\bullet), \Gamma_{1/2} \circ \Gamma_{1/2}(X^\times))\\
&=& (\gamma_{1/2}(X^\bullet), \Gamma_{1/2}(X^\times)) \text{ (since $\gamma_{1/2}$ and $\Gamma_{1/2}$ are openings)} \\
&=& [\gamma, \Gamma]_{1/2}(X)
\end{eqnarray*}
Therefore $[\gamma, \Gamma]_{1/2}$ is idempotent.\\
Since $\gamma_{1/2}$ and $\Gamma_{1/2}$ are openings, $\gamma_{1/2}(X^\bullet) \subseteq \gamma_{1/2}(Y^\bullet)$ and $\Gamma_{1/2}(X^\times) \subseteq \Gamma_{1/2}(Y^\times)$. Thus $[\gamma, \Gamma]_{1/2}(X) \subseteq [\gamma, \Gamma]_{1/2}(Y)$ so that $[\gamma, \Gamma]_{1/2}$ is increasing on $\mathcal{H}$. Hence $[\gamma, \Gamma]_{1/2}$ is an opening. Similarly, we can prove that $[\gamma, \Gamma]_{1}$ is an opening, and that $[\phi, \Phi]_{1/2}$ and $[\phi, \Phi]_{1}$ are closings.
\end{enumerate}
\end{proof}
\subsection{Granulometries}
Granulometries \cite{cousty2013morphological}, \cite{najman2013mathematical} deal with families of openings and closings that are parametrized by a positive number. A family $\lbrace \gamma_{\lambda} \rbrace$ of mappings from a lattice $\mathcal{L} \rightarrow \mathcal{L}$, depending on a positive parameter $ \lambda $ is a granulometry when
\begin{itemize}
\item[(i)] $\gamma_{\lambda}$ is an opening $\forall \lambda \geq 0$
\item[(ii)] $\lambda \geq \mu \geq 0 \Rightarrow  \gamma_{\lambda} \geq \gamma_{\mu}$
\end{itemize}
These conditions are called Matheron's axioms for granulometry.
\begin{defn}
Let $\lambda \in N$. We define $[\gamma, \Gamma]_{\lambda/2}$ (resp. $[\phi, \Phi]_{\lambda/2}$) as follows. $[\gamma, \Gamma]_{\lambda/2} = [\delta, \Delta]^i \circ ([\gamma, \Gamma]_{1/2})^j \circ [\epsilon, \varepsilon]^i$, where $i$ and $j$ are respectively the quotient and remainder when $\lambda$ is divided by 2. 
\end{defn}
\begin{Theorem}
The families $\lbrace [\gamma, \Gamma]_{\lambda/2} | \lambda \in N \rbrace$ and $\lbrace [\phi, \Phi]_{\lambda/2} | \lambda \in N \rbrace$ are granulometries.
\end{Theorem}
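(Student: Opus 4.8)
The plan is to verify Matheron's two axioms for the family $\lbrace [\gamma,\Gamma]_{\lambda/2} \mid \lambda \in N \rbrace$; the family $\lbrace [\phi,\Phi]_{\lambda/2} \mid \lambda \in N \rbrace$ is then handled by the dual argument (openings replaced by closings and every inclusion reversed), so I would only treat the opening family in detail. Throughout I abbreviate $E = [\epsilon,\varepsilon]$, $D = [\delta,\Delta]$ and $g = [\gamma,\Gamma]_{1/2}$, so that for $\lambda = 2i+j$ with $j \in \lbrace 0,1 \rbrace$ the definition reads $[\gamma,\Gamma]_{\lambda/2} = D^{i} \circ g^{j} \circ E^{i}$, where $g^{0}$ is the identity. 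Since $(E,D)$ is an adjunction by Theorem 2 and a composition of adjunctions is again an adjunction, the pair $(E^{i}, D^{i})$ is an adjunction on $\mathcal{H}$ for every $i \geq 0$.

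For the first axiom I would isolate a conjugation lemma: if $(\epsilon,\delta)$ is an adjunction on a lattice $\mathcal{L}$ and $\gamma$ is an opening on $\mathcal{L}$, then $\delta \circ \gamma \circ \epsilon$ is an opening on $\mathcal{L}$. Monotonicity is immediate by composition, and anti-extensivity follows from $\gamma \circ \epsilon \leq \epsilon$, which gives $\delta\gamma\epsilon \leq \delta\epsilon \leq \mathrm{id}$ (the last step because $\delta\epsilon$ is an opening). The crux is idempotence: extensivity of the closing $\epsilon\delta$ yields $\epsilon\delta\gamma \geq \gamma$, and applying the increasing, idempotent $\gamma$ gives $\gamma\epsilon\delta\gamma \geq \gamma$; substituting $\epsilon(X)$ and applying $\delta$ produces $\delta\gamma\epsilon\delta\gamma\epsilon \geq \delta\gamma\epsilon$, while anti-extensivity gives the reverse inclusion, so $\delta\gamma\epsilon$ is idempotent. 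Taking $\gamma = g$ (an opening, as established above) or $\gamma = \mathrm{id}$, together with the adjunction $(E^{i}, D^{i})$, shows that each $[\gamma,\Gamma]_{\lambda/2} = D^{i} g^{j} E^{i}$ is an opening on $\mathcal{H}$.

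For the second axiom it suffices, by transitivity of the pointwise order, to prove the one-step inclusion $[\gamma,\Gamma]_{(\lambda+1)/2} \subseteq [\gamma,\Gamma]_{\lambda/2}$ for every $\lambda$, and I would split on the parity of $\lambda$. If $\lambda = 2i$, then $[\gamma,\Gamma]_{(\lambda+1)/2} = D^{i} g E^{i}$ and $[\gamma,\Gamma]_{\lambda/2} = D^{i} E^{i}$, and the inclusion is obtained by applying the increasing $D^{i}$ to $g E^{i} \subseteq E^{i}$ (anti-extensivity of $g$). If $\lambda = 2i+1$, then $[\gamma,\Gamma]_{(\lambda+1)/2} = D^{i+1} E^{i+1} = D^{i}(DE)E^{i}$ and $[\gamma,\Gamma]_{\lambda/2} = D^{i} g E^{i}$, so applying the increasing $D^{i}$ reduces the claim to the inner inclusion $DE \subseteq g$, which is exactly Property \ref{Property6}(3), namely $[\gamma,\Gamma]_{1} \subseteq [\gamma,\Gamma]_{1/2}$, together with the earlier identity $[\gamma,\Gamma]_{1} = [\delta,\Delta] \circ [\epsilon,\varepsilon]$. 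Chaining these one-step inclusions gives $[\gamma,\Gamma]_{\lambda/2} \subseteq [\gamma,\Gamma]_{\mu/2}$ whenever $\lambda \geq \mu$, which is the ordering required of a granulometry; the dual chain of Property \ref{Property6}(3) yields the corresponding increasing family of closings $[\phi,\Phi]_{\lambda/2}$.

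I expect the main obstacle to be the idempotence step of the conjugation lemma, as it is the only place where the full force of the adjunction is used (both the anti-extensivity of $\delta\epsilon$ and the extensivity of $\epsilon\delta$); once that lemma is available, the remainder is monotonicity bookkeeping resting on Property \ref{Property6}(3). A secondary point requiring care is that the comparison must be organised one step at a time and split by parity, since the quotient $i$ in the Euclidean-division indexing advances precisely as $\lambda$ passes from an odd value to the next even one.
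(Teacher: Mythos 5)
Your proposal is correct and follows essentially the same route as the paper: establish that $([\epsilon,\varepsilon]^{i},[\delta,\Delta]^{i})$ is an adjunction, deduce that $[\delta,\Delta]^{i}\circ[\gamma,\Gamma]_{1/2}^{j}\circ[\epsilon,\varepsilon]^{i}$ is an opening via the conjugation fact, and obtain the ordering by sandwiching with $[\gamma,\Gamma]_{1}\subseteq[\gamma,\Gamma]_{1/2}\subseteq\mathrm{id}$ from Property \ref{Property6} and applying the increasing $[\delta,\Delta]^{i}$. The only differences are cosmetic: you prove the conjugation lemma that the paper merely cites, and you unify the even case by taking $\gamma=\mathrm{id}$ in it.
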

\begin{proof}
We know that if $(\alpha, \beta)$ is an adjunction, then $(\alpha \circ \alpha, \beta \circ \beta)$ is an adjunction and so $(\alpha^i, \beta^i)$ is also an adjunction for every $i \in N$ \cite{heijmans1990algebraic}. Now $([\epsilon, \varepsilon]^i, [\delta, \Delta]^i)$ is an adjunction for every $i \in N$, since $([\epsilon, \varepsilon], [\delta, \Delta])$ is an adjunction on $\mathcal{H}$. This implies $[\delta, \Delta]^i \circ [\epsilon, \varepsilon]^i$ is an opening. But $[\delta, \Delta]^i \circ [\epsilon, \varepsilon]^i = [\gamma, \Gamma]_{2i/2}$. This implies $[\gamma, \Gamma]_{\lambda/2}$ is an opening if $\lambda$ is even. If $\lambda$ is odd ($i.e.$ if $\lambda = (2i+1)/2$) then $[\gamma, \Gamma]_{\lambda/2} = [\gamma, \Gamma]_{(2i+1)/2} = [\delta, \Delta]^i \circ [\gamma, \Gamma]_{1/2} \circ [\epsilon, \varepsilon]^i$. If $(\alpha, \beta)$ is an adjunction and $\gamma$ is an opening on a lattice $\mathcal{L}$, then $\beta \circ \gamma \circ \alpha$ is an opening \cite{heijmans1997composing}. Since $([\epsilon, \varepsilon]^i, [\delta, \Delta]^i)$ is an adjunction and $[\gamma, \Gamma]_{1/2}$ is an opening, we have $[\delta, \Delta]^i \circ [\gamma, \Gamma]_{1/2} \circ [\epsilon, \varepsilon]^i$ is an opening.

Now we need to prove that $[\gamma, \Gamma]_{\mu/2}(X) \subseteq [\gamma, \Gamma]_{\lambda/2}(X)$, for $\lambda \leq \mu, \lambda,\mu \in N$ and $X \in \mathcal{H}$. We have $[\gamma, \Gamma]_{1}(X) \subseteq [\gamma, \Gamma]_{1/2}(X) \subseteq X$, for every $X \in \mathcal{H}$. (By property \ref{Property6}). Replacing $X$ 	with $[\epsilon, \varepsilon]^i(X)$, we have $[\gamma, \Gamma]_{1} \circ [\epsilon, \varepsilon]^i(X) \subseteq [\gamma, \Gamma]_{1/2} \circ [\epsilon, \varepsilon]^i(X) \subseteq [\epsilon, \varepsilon]^i(X)$. Now $ [\delta, \Delta]^i$ is a dilation, because $([\epsilon, \varepsilon]^i, [\delta, \Delta]^i)$ is an adjunction. This implies $[\delta, \Delta]^i \circ [\gamma, \Gamma]_{1} \circ [\epsilon, \varepsilon]^i(X) \subseteq [\delta, \Delta]^i \circ [\gamma, \Gamma]_{1/2} \circ [\epsilon, \varepsilon]^i(X) \subseteq [\delta, \Delta]^i \circ [\epsilon, \varepsilon]^i(X)$. That is $[\delta, \Delta]^{i+1} \circ [\epsilon, \varepsilon]^{i+1}(X) \subseteq [\delta, \Delta]^i \circ [\gamma, \Gamma]_{1/2} \circ [\epsilon, \varepsilon]^i(X) \subseteq [\delta, \Delta]^i \circ [\epsilon, \varepsilon]^i(X)$. Hence $[\gamma, \Gamma]_{(2i+2)/2}(X) \subseteq [\gamma, \Gamma]_{(2i+1)/2}(X) \subseteq [\gamma, \Gamma]_{2i/2}(X)$ for every $i \in N$. This implies $[\gamma, \Gamma]_{\mu/2}(X) \subseteq [\gamma, \Gamma]_{\lambda/2}(X)$ for every $\mu \geq \lambda$, $\mu, \lambda \in N$. Therefore the family $\lbrace [\gamma, \Gamma]_{\lambda/2} | \lambda \in N \rbrace$ is a granulometry. Similar line of arguments can be used to prove that $\lbrace [\phi, \Phi]_{\lambda/2} | \lambda \in N \rbrace$ is a granulometry.
\end{proof}

\begin{defn}
Let $\lambda \in N$ and $X \in \mathcal{H}$. We define the operator $ASF_{\lambda/2}$ by 
\begin{equation*}
ASF_{\lambda/2}(X)= 
\begin{cases}
 X & \text{if $\lambda = 0$}\\
[\gamma, \Gamma]_{\lambda/2} \circ [\phi, \Phi]_{\lambda/2} \circ ASF_{(\lambda-1)/2}(X) & \text{if $\lambda \neq 0$}
\end{cases}
\end{equation*}
\end{defn}
\begin{Corollary}
The family  $\lbrace ASF_{\lambda/2} | \lambda \in N \rbrace$ is a family of alternate sequential filters.
\end{Corollary}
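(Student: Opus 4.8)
The plan is to recognise this corollary as an instance of the classical structural theorem on alternate sequential filters: once a granulometry of openings and the associated anti-granulometry of closings are available on a complete lattice, the recursively defined alternating compositions automatically form a family of morphological filters. Thus the proof reduces to checking that the ingredients assembled in the preceding results satisfy the hypotheses of that theorem on the lattice $(\mathcal{H}, \subseteq)$, and then invoking it.

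First I would record the structural inputs. The preceding Theorem gives that $\lbrace [\gamma,\Gamma]_{\lambda/2} \mid \lambda \in N \rbrace$ is a granulometry, so each $[\gamma,\Gamma]_{\lambda/2}$ is an opening and $\mu \geq \lambda$ implies $[\gamma,\Gamma]_{\mu/2}(X) \subseteq [\gamma,\Gamma]_{\lambda/2}(X)$; by the dual argument $\lbrace [\phi,\Phi]_{\lambda/2} \mid \lambda \in N \rbrace$ is an anti-granulometry of closings, increasing in $\lambda$. Proposition \ref{Property6} then supplies the compatibility $[\gamma,\Gamma]_{\lambda/2}(X) \subseteq X \subseteq [\phi,\Phi]_{\lambda/2}(X)$ for every $X \in \mathcal{H}$, that is $[\gamma,\Gamma]_{\lambda/2} \leq \mathrm{id} \leq [\phi,\Phi]_{\lambda/2}$ pointwise. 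These are precisely the data required to run the Matheron--Serra machinery on $\mathcal{H}$.

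Next I would show that each $ASF_{\lambda/2}$ is a morphological filter, by induction on $\lambda$. Increasingness is immediate, since $ASF_{\lambda/2}$ is a composition of the increasing operators $[\gamma,\Gamma]_{\lambda/2}$, $[\phi,\Phi]_{\lambda/2}$ and (by the induction hypothesis) $ASF_{(\lambda-1)/2}$. The substance is idempotence, which I would extract from an absorption law of the form $ASF_{\mu/2} \circ ASF_{\lambda/2} = ASF_{\mu/2}$ for $\mu \geq \lambda$ (the case $\mu = \lambda$ yielding idempotence directly). To establish absorption I would use the overlapping lemma for filters: if $\gamma$ is an opening and $\phi$ a closing with $\gamma \leq \mathrm{id} \leq \phi$, then $\gamma\phi$, $\phi\gamma$, $\gamma\phi\gamma$ and $\phi\gamma\phi$ are filters, and a coarser opening/closing block absorbs a finer alternating block. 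The granulometric ordering recorded above is exactly what lets a coarse stage at level $\lambda$ swallow the finer stages accumulated inside $ASF_{(\lambda-1)/2}$.

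The hard part will be this idempotence/absorption step. Unlike the purely formal verifications used earlier (closure of $\mathcal{H}$ under the operators, or the adjunction calculus), idempotence of an alternating composition is not a direct consequence of composing adjunctions, since compositions of filters are in general not idempotent; it genuinely requires the granulometric ordering together with the filter-overlapping lemma. Everything else is bookkeeping with the already-established opening, closing and granulometry properties, so I would present the compatibility and granulometry facts first and then appeal to the standard structural theorem on alternate sequential filters (via the morphology references already cited in the paper) to conclude that $\lbrace ASF_{\lambda/2} \mid \lambda \in N \rbrace$ is a family of alternate sequential filters.
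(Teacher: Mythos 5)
Your proposal is correct, and it is worth noting that the paper itself supplies no argument here at all: the statement is presented as a bare corollary of the preceding granulometry theorem, with the classical Matheron--Serra theory (via the cited morphology references) left implicit. Your writeup makes exactly that implicit appeal explicit -- recording that $\lbrace [\gamma,\Gamma]_{\lambda/2}\rbrace$ is a granulometry, that $\lbrace [\phi,\Phi]_{\lambda/2}\rbrace$ is the dual anti-granulometry with $[\gamma,\Gamma]_{\lambda/2} \leq \mathrm{id} \leq [\phi,\Phi]_{\lambda/2}$, and then invoking the structural theorem whose content is the absorption law $ASF_{\mu/2} \circ ASF_{\lambda/2} = ASF_{\mu/2}$ for $\mu \geq \lambda$. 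You also correctly locate where the real mathematical work lies (idempotence of the alternating composition is not formal and genuinely needs the granulometric ordering), which the paper glosses over entirely. So this is the same route the paper intends, carried out in more detail than the paper provides; no gap.
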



\section{Conclusion}
This paper investigates the lattice of all subhypergraphs of a hypergraph $H$ and provides it with morphological operators. We created new dilations, erosions, openings, closings, granulometries and alternate sequential filters whose input and output are both hypergraphs. The proposed framework is then extended from subgraphs to functions that weight the vertices and hyperedges of a hypergraph. The proposed framework of hypergraphs presented in this paper encompasses mathematical morphology on graphs, simplicial complexes and morphology based on discrete spatially variant structuring elements.


\bibliographystyle{plain}
\bibliography{bino-bib}

\begin{thebibliography}{10}

\bibitem{berge1989hypergraphs}
Claude Berge.
\newblock Hypergraphs: Combinatorics of finite sets, 1989.

\bibitem{bloch2011mathematical}
Isabelle Bloch and Alain Bretto.
\newblock Mathematical morphology on hypergraphs: Preliminary definitions and
  results.
\newblock In {\em Discrete Geometry for Computer Imagery}, pages 429--440.
  Springer, 2011.

\bibitem{bloch2013mathematical}
Isabelle BLOCH and Alain BRETTO.
\newblock Mathematical morphology on hypergraphs, application to similarity and
  positive kernel.
\newblock {\em Computer vision and image understanding}, 117(4):342--354, 2013.

\bibitem{bloch2013similarity}
Isabelle Bloch, Alain Bretto, and Aur{\'e}lie Leborgne.
\newblock Similarity between hypergraphs based on mathematical morphology.
\newblock In {\em Mathematical Morphology and Its Applications to Signal and
  Image Processing}, pages 1--12. Springer, 2013.

\bibitem{bretto1997combinatorics}
Alain Bretto, J~Azema, Hocine Cherifi, and Bernard Laget.
\newblock Combinatorics and image processing.
\newblock {\em Graphical Models and Image Processing}, 59(5):265--277, 1997.

\bibitem{cousty2013morphological}
Jean Cousty, Laurent Najman, Fabio Dias, and Jean Serra.
\newblock Morphological filtering on graphs.
\newblock {\em Computer Vision and Image Understanding}, 117(4):370--385, 2013.

\bibitem{cousty2009some}
Jean Cousty, Laurent Najman, and Jean Serra.
\newblock Some morphological operators in graph spaces.
\newblock In {\em Mathematical Morphology and Its Application to Signal and
  Image Processing}, pages 149--160. Springer, 2009.

\bibitem{dharmarajan2010hypergraph}
R~Dharmarajan and K~Kannan.
\newblock A hypergraph-based algorithm for image restoration from salt and
  pepper noise.
\newblock {\em AEU-International Journal of Electronics and Communications},
  64(12):1114--1122, 2010.

\bibitem{dias2011some}
F{\'a}bio Dias, Jean Cousty, and Laurent Najman.
\newblock Some morphological operators on simplicial complex spaces.
\newblock In {\em Discrete Geometry for Computer Imagery}, pages 441--452.
  Springer, 2011.

\bibitem{gonzalez2002woods}
Rafael~C Gonzalez and E~Richard.
\newblock Woods, digital image processing.
\newblock {\em ed: Prentice Hall Press, ISBN 0-201-18075-8}, 2002.

\bibitem{heijmans1993graph}
Henk Heijmans and L~Vincent.
\newblock Graph morphology in image analysis.
\newblock 1993.

\bibitem{heijmans1997composing}
Henk~JAM Heijmans.
\newblock Composing morphological filters.
\newblock {\em Image Processing, IEEE Transactions on}, 6(5):713--723, 1997.

\bibitem{heijmans1990algebraic}
Henk~JAM Heijmans and Christian Ronse.
\newblock The algebraic basis of mathematical morphology i. dilations and
  erosions.
\newblock {\em Computer Vision, Graphics, and Image Processing},
  50(3):245--295, 1990.

\bibitem{najman2012short}
Laurent Najman, Fernand Meyer, et~al.
\newblock A short tour of mathematical morphology on edge and vertex weighted
  graphs.
\newblock {\em Image Processing and Analysis with Graphs: Theory and Practice},
  pages 141--174, 2012.

\bibitem{najman2013mathematical}
Laurent Najman and Hugues Talbot.
\newblock {\em Mathematical Morphology}.
\newblock John Wiley \& Sons, 2013.

\bibitem{ronse1990mathematical}
Christian Ronse.
\newblock Why mathematical morphology needs complete lattices.
\newblock {\em Signal processing}, 21(2):129--154, 1990.

\bibitem{serra1988image}
Jean Serra.
\newblock Image analysis and mathematical morphology. vol. 2: Theoretical
  advances (edited by j. serra), chapter 5, introduction to morphological
  filters, 1988.

\bibitem{serra1982image}
Jean~Paul Serra.
\newblock Image analysis and mathematical morphology, 1982.

\bibitem{shih2010image}
Frank~Y Shih.
\newblock {\em Image processing and mathematical morphology: fundamentals and
  applications}.
\newblock CRC press, 2010.

\bibitem{stell2012relations}
John~G Stell.
\newblock Relations on hypergraphs.
\newblock In {\em Relational and Algebraic Methods in Computer Science}, pages
  326--341. Springer, 2012.

\bibitem{vincent1989graphs}
Luc Vincent.
\newblock Graphs and mathematical morphology.
\newblock {\em Signal Processing}, 16(4):365--388, 1989.

\end{thebibliography}
\end{document}